  \newenvironment{smallarray}[1]
 {\null\,\vcenter\bgroup\scriptsize
  \arraycolsep=.13885em
  \hbox\bgroup$\array{@{}#1@{}}}
 {\endarray$\egroup\egroup\,\null}
\newtheorem{theorem}{Theorem}[section]
\newtheorem{lemma}[theorem]{Lemma}
\newtheorem{example}{Example}
\newtheorem{appxlem}{Lemma}[section]
\newcommand{\subscr}[2]{{#1}_{\textup{#2}}}
\newcommand\aamsout{\bgroup\markoverwith{\textcolor{violet}{\rule[0.5ex]{2pt}{1pt}}}\ULon}
\newcommand{\Rank}{\operatorname{Rank}}
\newcommand{\real}{\mathbb{R}}
\newcommand{\transpose}{\mathsf{T}} 
\newcommand{\T}{\mathsf{T}} 
\newcommand{\mc}{\mathcal}
\newcommand{\expect}[1]{\mathbb{E}\left[#1\right]}
\newcommand{\Tr}[1]{\mathrm{tr}\left[#1\right]}
\newcommand{\overbar}[1]{\mkern 1.5mu\overline{\mkern-1.5mu#1\mkern-1.5mu}\mkern 1.5mu}
\DeclareSymbolFont{bbold}{U}{bbold}{m}{n}
\DeclareSymbolFontAlphabet{\mathbbold}{bbold}
\newcommand\oprocendsymbol{\hbox{$\square$}}
\newcommand\oprocend{\relax\ifmmode\else\unskip\hfill\fi\oprocendsymbol}
\newcommand*{\QEDA}{\hfill\ensuremath{\blacksquare}}%
\newenvironment{pfof}[1]{\vspace{1ex}\noindent{\itshape Proof of
    #1:}\hspace{0.5em}} {\hfill\QEDA\vspace{1ex}}
\let\NAT@parse\undefined
\newcounter{MYtempeqncnt}
\begin{document}

\title{\LARGE \bf Behavioral Feedback for Optimal LQG Control}


\author{Abed~AlRahman~Al~Makdah,~Vishaal~Krishnan,~Vaibhav~Katewa,~and~Fabio~Pasqualetti
  \thanks{This material is based upon work supported in part by awards
    ONR-N00014-19-1-2264 and AFOSR-FA9550-20-1-0140.  A. A. Al Makdah
    and F. Pasqualetti are with the Department of Electrical and
    Computer Engineering and the Department of Mechanical Engineering
    at the University of California, Riverside, respectively,
    \href{mailto:aalmakdah@engr.ucr.edu}{\{\texttt{aalmakdah}},\href{mailto:fabiopas@engr.ucr.edu}{\texttt{fabiopas\}@engr.ucr.edu}}.
    V. Krishnan is with the School of Engineering and Applied Sciences, Harvard University, \href{mailto:vkrishnan@seas.harvard.edu}{\texttt{vkrishnan@seas.harvard.edu}}.  V. Katewa is
    with the Department of Electrical Communication Engineering at the
    Indian Institute of Science, Bangalore, India,
    \href{mailto:vkatewa@iisc.ac.in}{\texttt{vkatewa@iisc.ac.in}}.}}

\maketitle

\begin{abstract}
In this work, we revisit the Linear Quadratic Gaussian (LQG) optimal control problem from a behavioral perspective. 
Motivated by the suitability of behavioral models for data-driven control, 
we begin with a reformulation of the LQG problem in the space of input-output behaviors 
and obtain a complete characterization of the optimal solutions.
In particular, we show that the optimal LQG controller can be expressed as a static behavioral-feedback gain,
thereby eliminating the need for dynamic state estimation characteristic of state space methods. 
The static form of the optimal LQG gain also makes it amenable to its computation by 
gradient descent, which we investigate via numerical experiments. 
Furthermore, we highlight the advantage of this approach in the data-driven
control setting of learning the optimal~LQG~controller\mbox{~from~expert~demonstrations.}
\end{abstract}

%

\section{Introduction}\label{sec: introduction}
Data-driven control has received increasing interest during the past few years. Specifically, this interest has been surging towards optimal control problems\cite{GB-VK-FP:19,JC-JL-FD:18,ST-BR:19,FC-GB-FP:21}. The Linear Quadratic Gaussian (LQG) is one of the most fundamental optimal control problems, which deals with partially-observed linear dynamical systems in the presence of additive white gaussian noise \cite{KZ-JCD-KG:96}. When the system is known, the LQG problem enjoys an elegant closed-form solution obtained via the separation principle \cite[Theorem 14.7]{KZ-JCD-KG:96}. In the context of data-driven control, however, the LQG problem is less studied in the literature due to some major challenges: (i) the states of the system cannot be directly measured for learning purposes, (ii) the optimal policy is expressed in the dynamic controller form where it is not unique \cite{KZ-JCD-KG:96}, and (iii) the set of stabilizing controllers can be disconnected \cite{YZ-YT-NL:21}. On the other hand, the Linear Quadratic Regulator (LQR) optimal control problem has received more attention in the context of data-driven control \cite{GB-VK-FP:19,ST-BR:19,FC-GB-FP:21}. Some of the reasons that make the LQR problem attractive is that the optimal policy can be expressed as a static feedback gain and it is unique \cite[Theorem 14.2]{KZ-JCD-KG:96}. Moreover, the set of stabilizing feedback gains for the LQR problem is connected and the LQR cost function is gradient dominant \cite{MF-RG-SK-MM:18,HM-AZ-MS-MRJ:19}. These properties are useful for providing convergence guarantees for gradient-based methods for solving the LQR problem \cite{JB-AM-MF-MM:19,IF-BP:21} as well as for model-free policy optimization methods \cite{HM-AZ-MS-MRJ:21}. However, LQR controllers require measuring the full states, and are used in deterministic settings, which limits the use of LQR controllers in practical control applications.\\
In this paper, we make the LQG problem more accessible for data-driven methods. In particular, we show that the optimal LQG controller can be expressed as a static feedback gain by reformulation of the model-based LQG problem in the space of input-output behaviors. Then, we highlight the advantages of having a static LQG gain in the context of data-driven control and gradient-based~algorithms.\\
\textbf{Related work.} The LQG control problem has been studied extensively in the literature \cite{KZ-JCD-KG:96,DPB:01a}, where fundamental properties have been characterized, such as the existence of optimal solution, how to obtain it using separation principle \cite{KZ-JCD-KG:96}, and its lack of stability margin guarantees in closed-loop \cite{JCD:78}. 
However, in the context of data-driven control, the LQR problem has received more attention than the LQG problem. 
The landscape properties for the LQR problem with state-feedback control has been studied in \cite{MF-RG-SK-MM:18,HM-AZ-MS-MRJ:19}, 
which has paved the way for subsequent works investigating convergence properties of gradient methods for solving the LQR problem~\cite{JB-AM-MF-MM:19,IF-BP:21,HM-AZ-MS-MRJ:21}. 
Recent studies have revisited the LQG problem in the context of data-driven methods 
(e.g. \cite{SL-KA-BH-AA:20,LF-YZ-MK:20,YZ-LF-MK-NL:21}). In \cite{YZ-YT-NL:21}, the authors characterize the optimization landscape for the LQG problem, showing that the set of stabilizing dynamic controllers can be disconnected. In the context of data-driven control, the behavioral approach has 
garnered much attention in recent years~\cite{JCW-PR-IM-BLMDM:05,CDP-PT:19,LF-BG-AM-GFT:21,VK-FP:21},
as it circumvents the need for state space representation.
Owing it this fact, it belongs in the same category as the
difference operator representation and ARMAX models~\cite[Sec. 2.3 and Sec. 7.4]{GCG-KSS:14},
and shares several connections with these classes of models. We refer the reader to \cite{IM-FD:21-survey} for a comprehensive overview of the behavioral approach.\\
Despite recent interest in the behavioral approach, a fundamental
understanding of the LQG problem from a behavioral perspective 
is still lacking, and our work addresses this gap. 
Different from the literature, our work seeks to characterize the optimal behavioral feedback
controllers for the LQG problem in model-based setting, and to demonstrate their suitability for
data-driven control and gradient-based methods for controller design.
More specifically, we show that the optimal LQG controller can be expressed as a 
static behavioral-feedback gain, which underlies its advantages
for developing data-driven methods to learn LQG controllers.

%
%
%
\textbf{Contributions.} This paper features three main contributions. First, we introduce equivalent representations for stochastic discrete-time, linear, time-invariant systems and the LQG optimal control problem in the behavioral space (Lemma~\ref{lemma: system in z} and Lemma~\ref{lemma: lqg in z}, respectively). Second, we show that, in the behavioral space, the optimal LQG controller can be expressed as a static behavioral-feedback gain, which can be solved for directly from the LQG problem represented in the behavioral space (Theorem \ref{thrm: solution of lqg in z}). Third, we highlight the advantages of having a static feedback LQG gain over a dynamic LQG controller in the context of data-driven control and gradient-based algorithms (section~\ref{sec: numerical methods}).\\
%
\textbf{Notation.} A Gaussian random variable $x$ with mean $\mu$ and
covariance $\Sigma$ is denoted as $x\sim\mc{N}(\mu,\Sigma)$. The
$n\times n$ identity matrix is denoted by $I_n$. The expectation
operator is denoted by $\mathbb{E}[\cdot]$. The spectral radius and
the trace of a square matrix $A$ are denoted by $\rho(A)$ and
$\Tr{A}$, respectively. A positive definite (semidefinite) matrix
$A$ is denoted as $A\succ 0$ ($A\succeq 0$). The Kronecker product is denoted
by $\otimes$, and vectorization operator is denoted by vec($\cdot$). The left (right) pseudo inverse of a tall (fat) matrix $A$ is denoted by $A^{\dagger}$.
\begin{figure*}[!b]
\normalsize
\setcounter{MYtempeqncnt}{0}
\setcounter{equation}{4}
\hrulefill 
\smallskip
\begin{align} \label{eq: z dynamics}
\begin{aligned}
\underbrace{\left[ \begin{smallarray}{c}
u(t-n+1) \\
\vdots \\
u(t-1)\\
u(t) \\ 
\hdashline[2pt/2pt] 
y(t-n+1)\\
\vdots \\
y(t)\\
y(t+1) \\ 
\hdashline[2pt/0pt] 
w(t-n+1) \\
\vdots \\
w(t-1) \\
w(t) \\ 
\hdashline[2pt/2pt] 
v(t-n+1)\\
\vdots \\
v(t)\\
v(t+1)
\end{smallarray} \right]}_{z(t+1)}=&
\underbrace{\left[ \begin{smallarray}{ccccc;{2pt/2pt}ccccc;{2pt/0pt}ccccc;{2pt/2pt}ccccc}
0 & I & 0 & \cdots & 0 & 0 & 0 & 0& \cdots & 0 & 0 & 0 & 0& \cdots & 0 & 0 & 0 & 0& \cdots & 0\\
\vdots & \vdots & \vdots & \ddots & \vdots & \vdots & \vdots & \vdots & \ddots & \vdots & \vdots & \vdots & \vdots & \ddots & \vdots & \vdots & \vdots & \vdots & \ddots & \vdots \\
0 & 0 & 0 & \cdots & I & 0 & 0 & 0& \cdots & 0 & 0 & 0 & 0& \cdots & 0 & 0 & 0 & 0& \cdots & 0\\
0 & 0 & 0 & \cdots & 0 & 0 & 0 & 0& \cdots & 0 & 0 & 0 & 0& \cdots & 0 & 0 & 0 & 0& \cdots & 0\\
\hdashline[2pt/2pt] 
0 & 0 & 0 & \cdots & 0 & 0 & I & 0 & \cdots & 0 & 0 & 0 & 0& \cdots & 0 & 0 & 0 & 0& \cdots & 0\\
\vdots & \vdots & \vdots & \ddots & \vdots & \vdots & \vdots & \vdots & \ddots & \vdots & \vdots & \vdots & \vdots & \ddots & \vdots & \vdots & \vdots & \vdots & \ddots & \vdots \\
0 & 0 & 0& \cdots  & 0 & 0 & 0 & 0 & \cdots & I & 0 & 0 & 0& \cdots & 0 & 0 & 0 & 0& \cdots & 0\\
\multicolumn{5}{c;{2pt/2pt}}{\mc{A}_u} & \multicolumn{5}{c;{2pt/0pt}}{\mc{A}_y} & \multicolumn{5}{c;{2pt/2pt}}{\mc{A}_w} & \multicolumn{5}{c}{\mc{A}_v}\\
\hdashline[2pt/0pt] 
0 & 0 & 0 & \cdots & 0 & 0 & 0 & 0 & \cdots & 0 & 0 & I & 0 & \cdots & 0 & 0 & 0 & 0& \cdots & 0\\
\vdots & \vdots & \vdots & \ddots & \vdots & \vdots & \vdots & \vdots & \ddots & \vdots & \vdots & \vdots & \vdots & \ddots & \vdots & \vdots & \vdots & \vdots & \ddots & \vdots \\
0 & 0 & 0& \cdots  & 0 & 0 & 0 & 0 & \cdots & 0 & 0 & 0 & 0 & \cdots & I & 0 & 0 & 0& \cdots & 0 \\
0 & 0 & 0& \cdots  & 0 & 0 & 0 & 0 & \cdots & 0 & 0 & 0 & 0 & \cdots & 0 & 0 & 0 & 0& \cdots & 0 \\
\hdashline[2pt/2pt] 
0 & 0 & 0 & \cdots & 0 & 0 & 0 & 0 & \cdots & 0 & 0 & 0 & 0 & \cdots & 0 & 0 & I & 0 & \cdots & 0\\
\vdots & \vdots & \vdots & \ddots & \vdots & \vdots & \vdots & \vdots & \ddots & \vdots & \vdots & \vdots & \vdots & \ddots & \vdots & \vdots & \vdots & \vdots & \ddots & \vdots\\
0 & 0 & 0 & \cdots & 0 & 0 & 0 & 0& \cdots  & 0 & 0 & 0 & 0 & \cdots & 0 & 0 & 0 & 0 & \cdots & I \\
0 & 0 & 0 & \cdots & 0 & 0 & 0 & 0& \cdots  & 0 & 0 & 0 & 0 & \cdots & 0 & 0 & 0 & 0 & \cdots & 0
\end{smallarray} \right]}_{\mc{A}}
\underbrace{\left[ \begin{smallarray}{c}
u(t-n) \\
\vdots \\
u(t-2) \\
u(t-1) \\ 
\hdashline[2pt/2pt] 
y(t-n)\\
\vdots \\
y(t-1)\\
y(t) \\
\hdashline[2pt/0pt] 
w(t-n) \\
\vdots \\
w(t-2) \\
w(t-1) \\ 
\hdashline[2pt/2pt] 
v(t-n)\\
\vdots \\
v(t-1)\\
v(t)
\end{smallarray} \right]}_{z(t)}
+\underbrace{\left[ \begin{smallarray}{c;{2pt/0pt}c;{2pt/0pt}c}
0 & 0 & 0\\
\vdots & \vdots & \vdots \\
0 & 0 & 0\\
I  & 0 & 0 \\
\hdashline[2pt/2pt] 
0 & 0 & 0\\
\vdots & \vdots & \vdots \\
0 & 0 & 0\\
CB & C & I \\ 
\hdashline[2pt/0pt] 
0 & 0 & 0\\
\vdots & \vdots & \vdots \\
0 & 0 & 0\\
0 & I & 0\\ 
\hdashline[2pt/2pt] 
0 & 0 & 0\\
\vdots & \vdots & \vdots \\
0 & 0 & 0\\
0 & 0 & I
\end{smallarray} \right]}_{\left[ \begin{smallarray}{c;{1pt/0pt}c;{1pt/0pt}c}
\mc{B}_u & \mc{B}_w & \mc{B}_v
\end{smallarray}\right]}
\left[ \begin{smallarray}{c}
u(t)\\
w(t)\\
v(t+1)
\end{smallarray} \right],\\
y_z(t)=&\underbrace{\left[\begin{smallarray}{cccc;{2pt/0pt}cccc}
I & 0 & \cdots & 0 & 0 & 0 & \cdots & 0\\
0 & I & \cdots & 0 & 0 & 0 & \cdots & 0\\
\vdots & \vdots & \ddots & \vdots & \vdots & \vdots & \ddots & \vdots \\
0 & 0 & \cdots & I & 0 & 0 & \cdots & 0
\end{smallarray}\right]}_{\mc{C}} z(t) 
\end{aligned}
\end{align}
\setcounter{equation}{\value{MYtempeqncnt}}
\end{figure*}
\section{Problem setup and main results}\label{sec: setup}
Consider the discrete-time, linear, time-invariant system
\begin{align}
\begin{aligned}\label{eq: system in x}
  x(t+1) &= Ax(t) + Bu(t)+w(t),  \\ 
  y(t) &= Cx(t) + v(t), \qquad t\geq 0,
 \end{aligned}
\end{align}
where $x(t)\in\real^{n}$ denotes the state, $u(t)\in \real^{m}$ the
control input, $y(t)\in\real^{p}$ the measured output, $w(t)$ the
process noise, and $v(t)$ the measurement noise at time $t$. We assume that
$w(t)\sim\mc{N}(0,Q_w)$, with $Q_w\succeq 0$, $v(t)\sim\mc{N}(0,R_v)$,
with $R_v\succ 0$, and $x(0)\sim\mc{N}(0,\Sigma_0)$, with
$\Sigma_0 \succeq 0$, are independent of each other at all
times. For the system \eqref{eq: system in x}, the Linear Quadratic Gaussian
(LQG) problem asks to find a control input that minimizes the cost
\begin{align}\label{eq: LQG cost}
  \mc{J}\triangleq\lim_{T\rightarrow \infty}\mathbb{E} \left[
  \frac{1}{T}\Big(\sum_{t=0}^{T-1}x(t)^{\T}Q_x x(t) +
  u(t)^{\T} R_u u(t)\Big) \right] , 
\end{align}
where $Q_x \succeq 0$ and $R_u \succ 0$ are weighing matrices of~appropriate dimension. We assume that $(A,B)$ and $(A,Q_w^{1/2})$ are controllable, and $(A,C)$ and $(A,Q_x^{1/2})$ are observable. As a classic result \cite{KZ-JCD-KG:96}, the optimal control input that solves the LQG problem can be generated by a dynamic controller of the form
\begin{align}\label{eq: compensator}
\begin{aligned}
x_c(t+1) &= Ex_c(t)+Fy(t), \quad t\geq 0, \\
u(t) &= Gx_c(t) + Hy(t),
\end{aligned}
\end{align}
where $x_c(t)$ denotes the state at time $t$, and $E\in \mathbb{R}^{n\times n}$,
$F\in \mathbb{R}^{n\times p}$, $G\in \mathbb{R}^{m\times n}$, and
$H\in \mathbb{R}^{m\times p}$ are the dynamic, input, output and
feedthrough matrices of the compensator, respectively.
The optimal LQG controller can be
conveniently obtained using the separation principle by concatenating
the Kalman filter for \eqref{eq: system in x} with the (static)
controller that solves the Linear Quadratic Regulator problem for
\eqref{eq: system in x} with weight matrices $Q_x$ and
$R_u$. Specifically, after some manipulation, the optimal input that
solves the LQG problem reads as \eqref{eq: compensator}, we refer the reader to Appendix \ref{app: optimal LQG controller} for the details.\\
In what follows, we will make use of an equivalent representation of
the system \eqref{eq: system in x}. To this aim, let
\begin{align}\label{eq: z}
  z(t)\triangleq[U(t-1)^{\T},Y(t)^{\T},
  W(t-1)^{\T}, V(t)^{\T}]^{\T} ,
\end{align}
where
\begin{align}
  U(t-1)&\triangleq\left[u(t-n)^{\T}, \cdots, u(t-1)^{\T}\right]^{\T}, \nonumber\\
  Y(t)&\triangleq\left[y(t-n)^{\T}, \cdots, y(t)^{\T}\right]^{\T}, \nonumber\\
  W(t-1)&\triangleq\left[w(t-n)^{\T}, \cdots, w(t-1)^{\T}\right]^{\T}, \nonumber\\
  V(t)&\triangleq\left[v(t-n)^{\T}, \cdots, v(t)^{\T}\right]^{\T}. \nonumber
\end{align}
We can write an equivalent representation of \eqref{eq: system in x} in the behavioral space $z$ as \eqref{eq: z dynamics} (see Appendix \ref{app: behavioral dynamics} for the derivation).
%
In fact, given a sequence of control inputs and noise values, the state $z$ contains the system output $y$ over time, and can be used to reconstruct the exact value of the system state~$x$. This also implies that a controller for the system \eqref{eq: system in x} can
equivalently be designed using the dynamics \eqref{eq: z dynamics}. In
fact, we show that any \emph{dynamic} controller for
\eqref{eq: system in x} can be equivalently represented as a
\emph{static} controller for \eqref{eq: z dynamics}, see Appendix \ref{app: dynamic to static controller}. Next, we reformulate the LQG problem \eqref{eq: LQG cost} for the behavioral dynamics \eqref{eq: z dynamics} and characterize its optimal solution. The LQG problem \eqref{eq: LQG cost} can be equivalently written in the behavioral space as:
\setcounter{equation}{5}
  \begin{align}\label{eq: LQG cost z}
    \mc{J}_z\triangleq\lim_{T\rightarrow \infty}\mathbb{E} \left[
    \frac{1}{T}\Big(\sum_{t=n}^{T-1}z(t)^{\T}Q_z z(t) +
    u(t)^{\T} R_u u(t)\Big) \right], 
  \end{align}
subject to \eqref{eq: z dynamics}, where $Q_z$ is presented in Appendix \ref{app: LQG problem in the behavioral space} along with the derivation of \eqref{eq: LQG cost z}, and $R_u$ is as in \eqref{eq: LQG cost}.
The solution to the LQG problem in the behavioral space is given by a static controller, which we characterize next.
\begin{theorem}{\bf \emph{(Behavioral solution to the LQG
      problem)}}\label{thrm: solution of lqg in z}
  Let $u^*$ be the minimizer of \eqref{eq: LQG cost z} subject to
  \eqref{eq: z dynamics}. Then,
  \begin{align}\label{eq: behavioral LQG gain}
    u^* (t) =& \underbrace{-\left(R_u+\mc{B}_u^{\T}M
    \mc{B}_u\right)^{-1} \mc{B}_u^{\T}M \mc{A} P
    \mc{C}^{\T}\left( \mc{C} P
    \mc{C}^{\T}\right)^{\dagger}}_{\mc K} y_z (t)
    \end{align}
  where $M\succeq 0$ and $P \succeq 0$ are the unique solutions of the
  following coupled Riccati equations:
  \begin{align*}
    M &= \mc{A}^{\T}M \mc{A} -\mc{A}^{\T}M\mc{B}_u S_M
        \mc{B}_u^{\T}M \mc{A} + Q_z\\
      &+\left(I-P\mc{C}^{\T}S_P\mc{C}\right)^{\T}\mc{A}^{\T}M\mc{B}_u
        S_M\mc{B}_u^{\T} M
        \mc{A}\left(I-P\mc{C}^{\T}S_P\mc{C} \right)\\
    P &=\mc{A}P \mc{A}^{\T} - \mc{A}P\mc{C}^{\T}S_P
        \mc{C}P \mc{A}^{\T}+\mc{B}_w Q_w\mc{B}_w^{\T}
        + \mc{B}_v R_v\mc{B}_v^{\T}\\
    +&\left(I-M\mc{B}_u S_M
       \mc{B}_u^{\T}\right)^{\T}\mc{A}P\mc{C}^{\T}S_P\mc{C}P
       \mc{A}^{\T} \left(I-M\mc{B}_uS_M\mc{B}_u^{\T}\right)
  \end{align*}
  with $S_M\triangleq(R_u+\mc{B}_u^{\T}M\mc{B}_u)^{-1}$ and
  $S_P\triangleq(\mc{C}P\mc{C}^{\T})^{\dagger}$.\oprocend
\end{theorem}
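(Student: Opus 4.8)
The plan is to recast the minimization of $\mc{J}_z$ as a \emph{static output-feedback} problem on the lifted dynamics \eqref{eq: z dynamics} and to derive \eqref{eq: behavioral LQG gain} from its first-order optimality conditions. By the discussion preceding the statement, the measurement $y_z(t)=\mc{C}z(t)$ carries the entire length-$n$ input--output history, so any admissible controller for \eqref{eq: system in x} is realized as a static gain $u(t)=\mc{K}y_z(t)=\mc{K}\mc{C}z(t)$ on \eqref{eq: z dynamics}; hence it suffices to minimize $\mc{J}_z$ over the matrix $\mc{K}$. Substituting this feedback into \eqref{eq: z dynamics} yields the closed loop $z(t+1)=(\mc{A}+\mc{B}_u\mc{K}\mc{C})z(t)+\mc{B}_w w(t)+\mc{B}_v v(t+1)$ with stage cost $z^{\T}(Q_z+\mc{C}^{\T}\mc{K}^{\T}R_u\mc{K}\mc{C})z$.

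First I would express the averaged cost through the stationary covariance. Let $P=\lim_{t\to\infty}\E[z(t)z(t)^{\T}]$, which satisfies the closed-loop Lyapunov equation $P=(\mc{A}+\mc{B}_u\mc{K}\mc{C})P(\mc{A}+\mc{B}_u\mc{K}\mc{C})^{\T}+\mc{B}_w Q_w\mc{B}_w^{\T}+\mc{B}_v R_v\mc{B}_v^{\T}$; then $\mc{J}_z=\Tr{(Q_z+\mc{C}^{\T}\mc{K}^{\T}R_u\mc{K}\mc{C})P}$. Introducing a symmetric multiplier $M$ for the Lyapunov constraint and forming the Lagrangian, stationarity in $P$ gives the dual (cost-to-go) equation $M=(\mc{A}+\mc{B}_u\mc{K}\mc{C})^{\T}M(\mc{A}+\mc{B}_u\mc{K}\mc{C})+Q_z+\mc{C}^{\T}\mc{K}^{\T}R_u\mc{K}\mc{C}$, while stationarity in $\mc{K}$ gives $(R_u+\mc{B}_u^{\T}M\mc{B}_u)\mc{K}\mc{C}P\mc{C}^{\T}+\mc{B}_u^{\T}M\mc{A}P\mc{C}^{\T}=0$. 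Solving this linear condition for $\mc{K}$, and using the pseudoinverse because $\mc{C}P\mc{C}^{\T}$ need not be invertible, produces exactly the gain $\mc{K}$ of \eqref{eq: behavioral LQG gain}.

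Next I would substitute the optimal gain back into the two Lyapunov equations for $M$ and $P$. Writing $\mc{K}\mc{C}=-S_M\mc{B}_u^{\T}M\mc{A}\,P\mc{C}^{\T}S_P\mc{C}$ and expanding $(\mc{A}+\mc{B}_u\mc{K}\mc{C})^{\T}M(\mc{A}+\mc{B}_u\mc{K}\mc{C})+\mc{C}^{\T}\mc{K}^{\T}R_u\mc{K}\mc{C}$, the cross terms collect into $-\mc{A}^{\T}M\mc{B}_u S_M\mc{B}_u^{\T}M\mc{A}+(I-P\mc{C}^{\T}S_P\mc{C})^{\T}\mc{A}^{\T}M\mc{B}_u S_M\mc{B}_u^{\T}M\mc{A}(I-P\mc{C}^{\T}S_P\mc{C})$, recovering the stated $M$-equation; a dual manipulation of the closed-loop covariance equation, using the complementary factor $I-M\mc{B}_u S_M\mc{B}_u^{\T}$, yields the $P$-equation. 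This computation hinges on the projection identity $\mc{C}P\mc{C}^{\T}S_P\mc{C}P=\mc{C}P$, which holds because $\operatorname{Im}(\mc{C}P)\subseteq\operatorname{Im}(\mc{C}P\mc{C}^{\T})$ for $P\succeq 0$.

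Finally I would settle existence, uniqueness, and \emph{global} optimality. The coupled conditions above are only necessary, and the static output-feedback cost is generally nonconvex; to upgrade the stationary point to the global minimizer I would invoke Lemma~\ref{lemma: lqg in z}, by which $\mc{J}_z$ coincides in the infinite-horizon average with the original LQG cost $\mc{J}$, together with the fact that the latter attains a unique optimal value via the separation principle. Since every dynamic LQG controller corresponds to a static behavioral gain, this optimal value is achieved by some $\mc{K}$, which must satisfy the first-order conditions and therefore equals \eqref{eq: behavioral LQG gain}; the controllability and observability assumptions guarantee that the closed loop admits well-defined stationary $M,P\succeq 0$, and matching with the unique LQG optimum forces their uniqueness. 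The hard part will be precisely this last step: the nonconvexity of the static output-feedback landscape means the Riccati coupling alone cannot certify optimality, so the reduction to the classical LQG solution -- while carefully handling the singular $\mc{C}P\mc{C}^{\T}$ and the attendant projection identities -- is what makes the argument go through.
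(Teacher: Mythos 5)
Your proposal follows essentially the same route as the paper's proof: express the steady-state cost as $\Tr{Q_{\mc{K}}P}$ via the closed-loop Lyapunov equation, derive the first-order optimality conditions in $\mc{K}$ (your Lagrange multiplier $M$ is exactly the adjoint Lyapunov solution $M=\mc{A}_c^{\T}M\mc{A}_c+Q_{\mc{K}}$ that the paper obtains through its trace lemma), solve the stationarity condition for $\mc{K}$ using the pseudoinverse of $\mc{C}P\mc{C}^{\T}$, and substitute back to recover the coupled Riccati equations. The only substantive difference is that you explicitly flag and sketch the global-optimality step (via Lemma~\ref{lemma: lqg in z}, the dynamic-to-static correspondence, and the classical separation principle), a point the paper's proof leaves implicit since it stops at stationarity.
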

\smallskip
The proof of Theorem \ref{thrm: solution of lqg in z} is postponed to Appendix \ref{app: proof of theorem}. The gain $\mc{K}$ is not unique since $\mc{C}P\mc{C}^{\T}$ is generally not invertible. In some cases, such as with SISO systems, the gain $\mc{K}$ becomes unique, which gives solving for the optimal LQG controller in the behavioral space an advantage over solving for it in the state space. The issue of non-uniqueness of $\mc{K}$ stems from the fact that $y_z$ has components that are dependent on each other, which makes the left kernel of $\mc{C}P\mc{C}^{\T}$ non-empty. We can avoid this issue by carefully choosing the time window of $U$ and $Y$ that form the behavioral space in \eqref{eq: z}, but we leave this aspect for future work. Note that, solving the coupled Riccati equations that characterize the LQG gain in Theorem \ref{thrm: solution of lqg in z} can be challenging. One method to solve for the LQG gain is to solve for the LQR and the Kalman gains, then use \eqref{eq: state space LQG} and Lemma \ref{lemma: static controller in z}.

%
%
%
\begin{example}{\bf \emph{(LQG controller in the behavioral space)}} \label{ex: example_1}
Consider the system \eqref{eq: system in x} with $A=1.1$, $B=1$, $C=1$, $Q_w=0.5$, and $R_v=0.8$. Also, consider the optimal control problem \eqref{eq: LQG cost} with $Q_x=R_u=1$. The Kalman and the LQR gains are $\subscr{K}{kf}=0.5474$ and $\subscr{K}{lqr}=0.7034$, respectively, which can be written as \eqref{eq: compensator} using \eqref{eq: state space LQG} with $E=0.1716$, $F=0.0973$, $G=-0.7034$, and $H=-0.3991$. Using \eqref{eq: z}, we define the behavioral space as $z(t)\triangleq\left[u(t-1), y(t-1), y(t),w(t-1),v(t-1),v(t)\right]^{\T}$ for $t\geq 1$. Using Lemma \ref{lemma: system in z}, we write the equivalent dynamics of \eqref{eq: system in x} in the behavioral space as \eqref{eq: z dynamics} with $\mc{A}_u=0.4977$, $\mc{A}_y=\begin{bmatrix}0.5475 & 0.6023 \end{bmatrix}$, $\mc{A}_w=0.4977$, and $\mc{A}_y=\begin{bmatrix}-0.5475 & -0.6023 \end{bmatrix}$. Using Theorem \ref{thrm: solution of lqg in z}, the LQG gain is $\mc{K}=[0.1716,0,-0.3991]$. Fig. \ref{fig: example_1}(a) shows the free response of \eqref{eq: system in x} and \eqref{eq: z dynamics} with equal initial conditions. Fig. \ref{fig: example_1}(b) shows the response of \eqref{eq: system in x} and \eqref{eq: z dynamics} to the LQG controllers \eqref{eq: state space LQG} and \eqref{eq: behavioral LQG gain}, respectively.\oprocend
\end{example}
\begin{figure}[!t]
  \centering
  \includegraphics[width=1\columnwidth,trim={0cm 0cm 0cm
    0cm},clip]{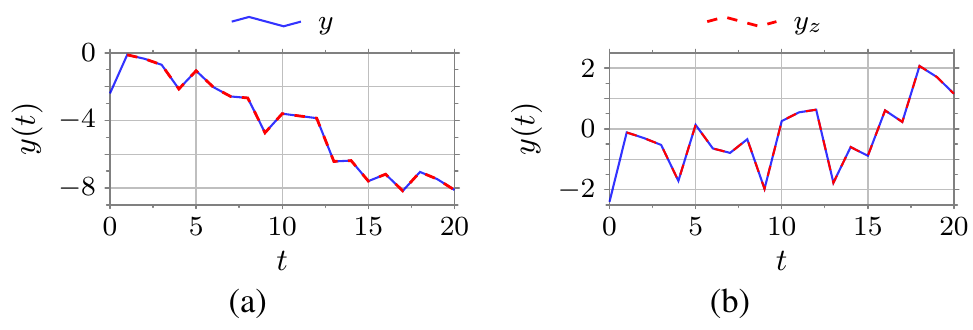}
  \caption{This figure shows the free response and the LQG feedback response of \eqref{eq: system in x} and \eqref{eq: z dynamics} for the setting defined in Example \ref{ex: example_1}. In both panels, the solid blue line and the dashed red line represent the output of \eqref{eq: system in x} and the output of \eqref{eq: z dynamics} that corresponds to $y(t)$, respectively. Panel (a) shows the free response of \eqref{eq: system in x} and \eqref{eq: z dynamics}, we observe that the response of both systems are equal, which agrees with Lemma \ref{lemma: system in z}. Panel (b) shows the feedback response of \eqref{eq: system in x} and \eqref{eq: z dynamics} to the LQG controller \eqref{eq: state space LQG} and the behavioral LQG controller in Theorem \ref{thrm: solution of lqg in z}, respectively. We observe that both systems have equal responses, which agrees with Lemma \ref{lemma: lqg in z} and Theorem \ref{thrm: solution of lqg in z}. Notice that the response of \eqref{eq: z dynamics} starts at time $t=n=1$ since we have to wait $n=1$ time steps in order to get the equivalent initial condition for \eqref{eq: z dynamics}.}
    \label{fig: example_1}
\end{figure}

\section{Implications of behavioral representation in numerical methods} \label{sec: numerical methods}
In this section, we highlight some implications of our behavioral representation and results. In particular, we provide an analysis of learning the LQG controller from finite expert demonstrations, and an analysis of solving for the behavioral LQG gain via a gradient descent method. First, we present the following Lemma regarding the sparsity of the LQG gain in \eqref{eq: behavioral LQG gain}, which we use in our subsequent analysis.
\begin{lemma}{\bf \emph{(Sparsity of the optimal LQG gain)}}\label{lemma: sparsity of the LQG gain} 
Consider the LQG gain written in the behavioral space as
\begin{align}\label{eq: lqg controller rewritten}
u(t)=\left[ \begin{array}{ccc}
\mc{K}_1 & \mc{K}_{2} & \mc{K}_{3}
\end{array}\right]
\left[\begin{array}{c}
U(t-1)\\
y(t-n)\\
\overline{Y}(t)
\end{array}\right],
\end{align}
where $\overline{Y}(t)\triangleq \left[y(t-n+1)^{\T}, \cdots, y(t)^{\T}\right]^{\T}$. Then $\mc{K}_{2}=0$.~\oprocend
\end{lemma}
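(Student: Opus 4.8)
The plan is to work from the explicit construction of the behavioral gain $\mc{K}$ obtained by converting the dynamic LQG compensator \eqref{eq: compensator} into a static feedback on $z$ (the route via \eqref{eq: state space LQG} and Lemma~\ref{lemma: static controller in z} noted after Theorem~\ref{thrm: solution of lqg in z}). In that construction the input is produced as $u(t)=Gx_c(t)+Hy(t)$, where the compensator state $x_c(t)$ is itself reconstructed as a linear function of the measured window $y_z(t)=[U(t-1)^{\T},Y(t)^{\T}]^{\T}$. Since $Y(t)=[y(t-n)^{\T},\overline{Y}(t)^{\T}]^{\T}$ and $y(t-n)\neq y(t)$, the feedthrough term $Hy(t)$ contributes nothing to the coefficient of $y(t-n)$; hence proving $\mc{K}_{2}=0$ reduces to showing that the oldest output $y(t-n)$ does not appear in the reconstructed state $x_c(t)$.

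First I would make the reconstruction explicit. Using that the optimal compensator is a minimal realization, the pair $(E,G)$ is observable, so stacking $u(t-n),\dots,u(t-1)$ through the compensator relations $u(\tau)=Gx_c(\tau)+Hy(\tau)$ and $x_c(\tau+1)=Ex_c(\tau)+Fy(\tau)$ yields $U(t-1)=\mathcal{O}\,x_c(t-n)+\mathcal{T}\,[y(t-n)^{\T},\dots,y(t-1)^{\T}]^{\T}$, where $\mathcal{O}=[G^{\T},(GE)^{\T},\dots,(GE^{n-1})^{\T}]^{\T}$ and $\mathcal{T}$ is the associated Toeplitz matrix whose first block column is $[H^{\T},(GF)^{\T},(GEF)^{\T},\dots,(GE^{n-2}F)^{\T}]^{\T}=:v$. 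Inverting $\mathcal{O}$ on the left and propagating $x_c(t-n)$ forward by $n$ steps gives $x_c(t)=E^{n}\mathcal{O}^{\dagger}U(t-1)+(\text{terms in }y(t-n),\dots,y(t-1))$, and collecting the coefficient of the oldest sample I expect to obtain exactly $E^{n-1}F-E^{n}\mathcal{O}^{\dagger}v$.

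It then remains to show this coefficient is zero, which is the heart of the argument. The key is the structural identity $H=GE^{-1}F$ satisfied by the optimal LQG compensator --- equivalently, its transfer function vanishes at the origin. This follows from the factored (current-estimator) form of the separation-principle controller, $E=(A-B\subscr{K}{lqr})(I-\subscr{K}{kf}C)$, $F=(A-B\subscr{K}{lqr})\subscr{K}{kf}$, $G=-\subscr{K}{lqr}(I-\subscr{K}{kf}C)$, $H=-\subscr{K}{lqr}\subscr{K}{kf}$, for which $GE^{-1}F=-\subscr{K}{lqr}\subscr{K}{kf}=H$; since $H=GE^{-1}F$ is invariant under state-coordinate changes of the compensator, it holds for whatever realization \eqref{eq: state space LQG} produces. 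Substituting $H=GE^{-1}F$ shows $v=\mathcal{O}E^{-1}F$, so $\mathcal{O}^{\dagger}v=E^{-1}F$ and the coefficient collapses to $E^{n-1}F-E^{n}E^{-1}F=0$, giving $\mc{K}_{2}=0$. The main obstacle I anticipate is the general MIMO bookkeeping: verifying the identity $H=GE^{-1}F$ and the structure of $v$ without the scalar commutativity available for SISO systems, and discharging the technical hypotheses (minimality of the compensator so that $\mathcal{O}$ has full column rank, and invertibility of $E$) under which the pseudoinverse manipulation is exact.
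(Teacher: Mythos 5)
Your construction is the same as the paper's: you pass to the static behavioral gain of Lemma~\ref{lemma: static controller in z}, identify $\mc{K}_2$ with the first block of $\mc{T}_2 - GE^{n}\mc{T}_1^{\dagger}\mc{M}$ (your $\mathcal{O}$ is $\mc{T}_1$, your $v$ is the first block column of $\mc{M}$, and your coefficient $G\big(E^{n-1}F - E^{n}\mc{T}_1^{\dagger}v\big)$ is exactly the expression the paper cancels), and you kill it with a structural identity relating $H$, $F$, $E$, $G$ of the separation-principle compensator. That skeleton is sound and matches the paper's proof step for step.

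The gap is the identity you chose and the hypothesis it drags in. You reduce everything to $H = GE^{-1}F$, which presupposes $E$ invertible, and you leave that hypothesis undischarged --- rightly so, because it can fail. For the optimal compensator \eqref{eq: state space LQG}, $E = (I_n - \subscr{K}{kf}C)(A - B\subscr{K}{LQR})$; since $A - B\subscr{K}{LQR} = \big(I_n + BR_u^{-1}B^{\T}\Pi\big)^{-1}A$ (with $\Pi$ the LQR Riccati solution) and $I_n - \subscr{K}{kf}C$ is invertible whenever $R_v \succ 0$, the matrix $E$ is singular precisely when $A$ is singular. So for any plant with a rank-deficient $A$ (e.g., $A$ nilpotent), your argument as written collapses even though the lemma still holds; minimality of the compensator does not rescue it. The fix --- and this is exactly what the paper does via Lemma~\ref{lemma: compensator forms} --- is to use the inverse-free relations that \eqref{eq: state space LQG} satisfies by construction: $F = E\overbar{F}$ and $H = G\overbar{F}$ with $\overbar{F} = \subscr{K}{kf}$. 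With these, your $v$ equals $\mc{T}_1\overbar{F}$ without inverting anything, $\mc{T}_1^{\dagger}v = \overbar{F}$ (using full column rank of $\mc{T}_1$, an assumption both proofs share), and
\begin{align*}
\mc{K}_2 = GE^{n-1}F - GE^{n}\mc{T}_1^{\dagger}v = GE^{n}\overbar{F} - GE^{n}\overbar{F} = 0 .
\end{align*}
In short: replace $E^{-1}F$ by $\subscr{K}{kf}$ throughout (note $E^{-1}F = \subscr{K}{kf}$ whenever the inverse exists, so this is the natural regularization of your identity), and your proof becomes the paper's, valid with no invertibility assumption on $E$ and with no MIMO commutativity issues.
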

\smallskip
A proof of Lemma \ref{lemma: sparsity of the LQG gain} is presented in Appendix \ref{app: sparsity of the LQG gain}.
\subsection{Learning LQG controller from expert demonstrations}
Consider the system \eqref{eq: system in x}, assume that the system is stabilized by an expert that uses optimal LQG controller. We also assume that the system and the noise statistics are unknown. Our objective is to learn the optimal LQG controller from finite expert demonstrations, which are composed of input and output data. In the behavioral representation, this boils down to learning the gain $\mc{K}$ of the subspace $u(t)=\mc{K}y_z(t)$ for $t\geq n$. Using Lemma \ref{lemma: sparsity of the LQG gain}, we only need to learn $\mc{K}_1$ and $\mc{K}_3$, which are obtained as $\left[ \mc{K}_1 \ \mc{K}_3 \right]=U_N Y_N^{\dagger} + \mc{K}_{\text{null}}$, where
\begin{align}\label{eq: expert data}
\begin{split}
U_N&\triangleq
\left[\begin{smallarray}{ccc}
u(t) & \cdots & u(t+k-1)
\end{smallarray}\right],\\
Y_N& \triangleq
\left[\begin{smallarray}{ccc}
u(t-n)     & \cdots   & u(t-n+k-1)\\
\vdots    & \ddots   & \vdots \\
u(t-1)     &  \cdots  &  u(t-2+k)\\
y(t-n+1) &  \cdots  & y(t-n+k) \\
\vdots    & \ddots   & \vdots \\
y(t)        & \cdots   & y(t-1+k)
\end{smallarray}\right],
\end{split}
\end{align}
for $t\geq n$, where $k$ is the number of columns, and $\mc{K}_{\text{null}}$ is any matrix with appropriate dimension whose rows belong to the left null space of $Y_N$. Note that $\mc{K}_{\text{null}}$ will disappear when multiplied by the feedback $y_z(t)$, i.e., $\mc{K}_{\text{null}}y_z(t)=0 $. Therefore, without loss of generality, we set $\mc{K}_{\text{null}}=0$.


\begin{lemma}{\bf \emph{(Sufficient number of expert data to compute the optimal LQG gain)}}\label{lemma: number of expert data} 
Consider input and output expert samples $U=[u(t),\cdots,u(t+N-1)]$ and $Y=[y(t),\cdots,y(t+N-1)]$ 
generated by LQG controller to stabilize system \eqref{eq: system in x}, such that $U$ is full-rank. 
Then, $N=n+nm+np$  expert samples are sufficient to compute the LQG gain $\mc{K}$.\oprocend
\end{lemma}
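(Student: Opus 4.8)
The plan is to reduce the statement to a rank condition on the regressor matrix $Y_N$ of \eqref{eq: expert data} and then to a counting argument. Concretely, I would show that (i) the exact recovery $[\mc{K}_1\ \mc{K}_3]=U_N Y_N^{\dagger}$ succeeds if and only if $Y_N$ has full row rank, and (ii) the minimal sample budget needed to build a $Y_N$ with enough columns to reach that rank is exactly $N=n+nm+np$.

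First I would record the exact algebraic relation the expert data satisfies. By Theorem \ref{thrm: solution of lqg in z} the optimal input obeys $u(\tau)=\mc{K}\,y_z(\tau)$, and by Lemma \ref{lemma: sparsity of the LQG gain} this collapses to $u(\tau)=[\mc{K}_1\ \mc{K}_3]\,[U(\tau-1)^{\T},\overline{Y}(\tau)^{\T}]^{\T}$ for every $\tau\geq n$. Stacking $k$ consecutive instances yields $U_N=[\mc{K}_1\ \mc{K}_3]\,Y_N$ with no residual term, since this feedthrough relation is algebraic in the measured data. Hence the least-norm solution $U_N Y_N^{\dagger}$ returns the true gain on the row space of $Y_N$ (the component $\mc{K}_{\text{null}}$ lives in the complementary space and is annihilated by any admissible regressor $y_z$), and it coincides with $[\mc{K}_1\ \mc{K}_3]$ on all admissible regressors precisely when the rows of $Y_N$ are independent, i.e. when $Y_N$ has full row rank $nm+np$. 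This reduces the lemma to a rank-plus-counting statement.

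Next I would count. Each column of $Y_N$ stacks $nm$ entries from $U(\tau-1)=[u(\tau-n)^{\T},\dots,u(\tau-1)^{\T}]^{\T}$ and $np$ entries from $\overline{Y}(\tau)=[y(\tau-n+1)^{\T},\dots,y(\tau)^{\T}]^{\T}$, so $Y_N$ has exactly $nm+np$ rows. Forming a single column consumes a window of $n$ past samples, so from a record of $N$ consecutive samples one can build $k=N-n$ columns. Full row rank forces $k\geq nm+np$, whence the smallest admissible record length is $N=n+nm+np$, for which $Y_N$ is square of size $(nm+np)\times(nm+np)$.

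The hard part is certifying that this square $Y_N$ is invertible, i.e. converting the hypothesis that $U$ is full rank into full rank of the mixed input/output Hankel matrix $Y_N$. I would argue by contradiction: a nonzero left-kernel vector $[\alpha^{\T}\ \beta^{\T}]$ would impose an autoregressive relation $\sum_{i}\alpha_i^{\T}u(\tau-n-1+i)+\sum_{j}\beta_j^{\T}y(\tau-n+j)=0$ holding across the entire window. Using controllability of $(A,B)$ and observability of $(A,C)$ together with the excitation of the record—which, since the input is feedback-generated rather than freely persistently exciting, is supplied by the process and measurement noise entering through $\mc{B}_w$ and $\mc{B}_v$ in \eqref{eq: z dynamics}—such a relation cannot hold nontrivially; equivalently, the columns of $Y_N$ span the projection of the reachable set of the behavioral state $z$ onto the $(U,\overline{Y})$ coordinates. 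Making this step fully rigorous, in particular ruling out the degenerate dependencies induced by the closed-loop map while invoking only the stated full-rank condition on $U$, is where the main difficulty lies and is the step I would treat with the most care.
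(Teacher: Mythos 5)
Your counting step is exactly the paper's argument: $Y_N$ has $nm+np$ rows, each column consumes a window of $n$ past samples, so $k=nm+np$ columns require $N=n+nm+np$ samples. The genuine gap is in your reduction (i) --- the claim that exact recovery succeeds \emph{precisely when} $Y_N$ has full row rank --- and it is this claim that forces you into the ``hard part'' you admit you cannot close. Recovery in the sense of the lemma does not require full row rank: the recovered gain is $U_N Y_N^{\dagger}+\mc{K}_{\text{null}}$, and the paper explicitly tolerates the left-null-space ambiguity because $\mc{K}_{\text{null}}\,y_z(t)=0$ for every admissible regressor. What is needed is only that the row space of $Y_N$ contain the subspace in which the regressors $[U(\tau-1)^{\T},\overline{Y}(\tau)^{\T}]^{\T}$ live, i.e., that $\Rank(Y_N)$ has saturated. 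Your own parenthetical (``$\mc{K}_{\text{null}}$ \dots is annihilated by any admissible regressor'') already contradicts your ``precisely when.''

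Worse, the certificate you defer is false in general, so no amount of care could complete it. The paper notes after Theorem \ref{thrm: solution of lqg in z} that $\mc{C}P\mc{C}^{\T}$ is \emph{generally singular} precisely because the components of $y_z$ are linearly dependent under the closed loop; these dependencies are structural, so every column of $Y_N$ lies in a fixed proper subspace of $\real^{nm+np}$ and $Y_N$ can never attain full row rank, no matter how many samples are collected --- yet the lemma still holds. Controllability, observability, and noise excitation cannot rescue this, and the stated hypothesis that $U$ is full rank (rank $m$) is far too weak to imply it in any case. The paper's own proof sidesteps the issue entirely: since $\Rank(Y_N)\le nm+np$, taking $k=nm+np$ columns suffices for the rank (equivalently, the row space) of $Y_N$ to stop increasing, after which $U_N Y_N^{\dagger}$ determines the gain up to the harmless $\mc{K}_{\text{null}}$, and the sample count $N=n+k$ follows from the same window bookkeeping you performed. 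The fix is therefore to drop the full-row-rank requirement from your argument, not to try to prove it.
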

\smallskip
A proof of Lemma \ref{lemma: number of expert data} is presented in Appendix \ref{app: number of expert data}. We note that the rank condition on the input matrix~$U$ in the statement of Lemma~\ref{lemma: number of expert data}
is a reasonable assumption owing to the fact that system~\eqref{eq: system in x} is driven by i.i.d. process noise~$w$ and that 
the measurement noise~$v$ is also i.i.d.
Furthermore, note that we can learn the dynamic controller matrices $E$, $F$, $G$, and $H$ in \eqref{eq: compensator} (up to a similarity transformation) using subspace identification methods for deterministic systems (see \cite{PVO-BDM:96}) with $U$ and $Y$ treated as the output and input signals to \eqref{eq: compensator}, respectively. Using \cite[Theorem 2]{PVO-BDM:96}, we need at least $N= 2(n+1)(m+p+1)-1$ expert samples to learn \eqref{eq: compensator}, which is more than the sufficient number of expert samples to learn $\mc{K}$ (Lemma \ref{lemma: number of expert data}). 
\begin{example}{\bf \emph{(Learning LQG controller from expert data)}} \label{ex: example_3}
Consider the system in Example \ref{ex: example_1} where the system dynamics and the noise statistics are assumed to be unknown. The system is driven by an expert that uses an LQG policy. According to Lemma \ref{lemma: number of expert data}, we collect $N=n+nm+np=3$ expert input-output samples to form the data matrices 
\begin{align*}
U_N=\begin{bmatrix}
-0.2269 & -0.1231
\end{bmatrix},\quad 
Y_N=\begin{bmatrix}
1.7878 & -0.2269\\
1.3371 & 0.211
\end{bmatrix}.
\end{align*}
Using the data, we obtain $[\mc{K}_1 \ \mc{K}_3]=[0.1716 \ -0.3991]$~with $\mc{K}_{\text{null}}=0$, which matches the LQG gain in Example \ref{ex: example_1}.\oprocend
\end{example}

\subsection{Gradient descent in the behavioral space}
In this section, we use gradient descent to solve for $\mc{K}$:
\begin{align}\label{eq: grad descent}
\mc{K}^{(i+1)}=\mc{K}^{(i)} - \alpha^{(i)} \nabla\mc{J}_z(\mc{K}^{(i)})
\quad \text{for } i=0,1,2, \cdots
\end{align}
where the index $i$ refers to the iteration number, $\alpha^{(i)}$ is the step size at iteration $i$, and $\nabla\mc{J}_z(\mc{K}^{(i)})$ is computed using \eqref{eq: derivative of Jz}. We initialize the gradient descent method with a stabilizing gain $\mc{K}^{(0)}$. We determine the step size $\alpha^{(i)}$ by the Armijo rule \cite[Chapter $1.3$]{DPB:95B}: initialize $\alpha^{(0)}=1$, repeat $\alpha^{(i)}=\beta \alpha^{(i)} $ until 
\begin{align*}
\mc{J}_z(\mc{K}^{(i+1)})\leq \mc{J}_z(\mc{K}^{(i)}) -\sigma\alpha^{(i)} \left\| \nabla\mc{J}_z(\mc{K}^{(i)})\right\|_{\text{F}}^2
\end{align*}
is satisfied, with $\beta,\sigma \in (0,1)$.
\begin{example}{\bf \emph{(Gradient descent)}} \label{ex: example_4}
We consider the example in \cite{JCD:78} discretized with sampling time $T_s=0.4$,
\begin{align*}
A&=\begin{bmatrix}
1.4918 & 0.5967 \\
0          & 1.4918
\end{bmatrix}, \ 
B=\begin{bmatrix}
0.1049\\
0.4918
\end{bmatrix},\ 
C=\begin{bmatrix}
1 & 0
\end{bmatrix},\\
Q_w&=\begin{bmatrix}
4.6477 & 3.7575 \\
3.7575 & 3.0639
\end{bmatrix},\quad
Q_x=\begin{bmatrix}
3.0639 & 3.7575 \\
3.7575 & 4.6477
\end{bmatrix},\quad
\end{align*}
$R_v=2.5$ and $R_u=0.5966$. The LQG gain from Theorem \ref{thrm: solution of lqg in z} is $\mc{K}=[-0.0366, -0.103,0,5.8461, -4.7434]$. Using Lemma \ref{lemma: sparsity of the LQG gain}, we only need to do the search over $\mc{K}_1$ and $\mc{K}_3$ since $\mc{K}_2=0$. We use gradient descent in \eqref{eq: grad descent} to solve for the LQG gain. We choose a stabilizing initial gain $\mc{K}^{(0)}$ that randomly place the closed-loop eigenvalues within $[0.45,0.92]$. We use the Armijo rule to compute the step size with $\alpha^{(0)}=1$, $\beta=0.8$, and $\sigma=0.7$. We set the stopping criteria to be when the gradient vanishes or when the maximum number of iterations is reached (in this example we set it to $15000$ iterations). For numerical comparison, we use gradient descent to solve for the optimal LQG dynamic controller in the form of \eqref{eq: compensator} as in \cite{YZ-YT-NL:21}, where we optimize the LQG cost \eqref{eq: LQG cost} and apply the gradient search over the control parameters $E$, $F$, $G$, and $H$.{\footnote{In \cite{YZ-YT-NL:21}, $H=0$ since it is assumed that the control input $u(t)$ at time $t$ depends on the history $\{u(0),\cdots,u(t-1),y(0),\cdots,y(t-1)\}$. In this paper, $u(t)$ depends also on $y(t)$, therefore $H$ is nonzero (see Appendix~\ref{app: optimal LQG controller}). We computed the gradient of $\mc{J}$ w.r.t. the controller matrices $E$, $F$, $G$ and $H$ as in \cite{YZ-YT-NL:21} adapted to the case where $H$ is nonzero. We have not included the derivations in this paper due to space constraint.}} Fig. \ref{fig: example_4} shows the convergence performance of the gradient descent for different initial conditions. We observe that the gradient descent over $\mc{K}$ in Fig. \ref{fig: example_4}(a) converges to $\mc{K}^*=[-0.0366, -0.1030,0,5.8460, -4.7434]$ before reaching the maximum number iterations for different initial conditions. Starting from initial conditions equivalent to the ones in Fig. \ref{fig: example_4}(a), the gradient descent over the controller matrices $E$, $F$, $G$ and $H$ in Fig. \ref{fig: example_4}(b) did not converge within~$15000$~iterations.~\oprocend
\end{example}
%
\begin{figure}[!t]
  \centering
  \includegraphics[width=1\columnwidth,trim={0cm 0cm 0cm
    0cm},clip]{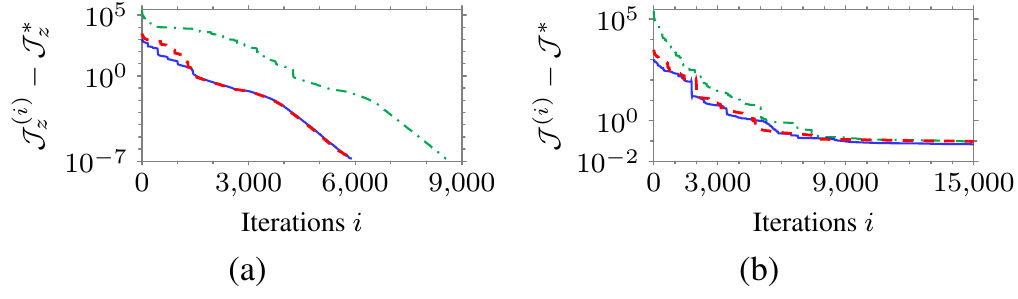}
  \caption{This figure shows the convergence performance (measured by the suboptimality gap) of the gradient descent applied to the system in Example \ref{ex: example_4}. The solid blue line, dashed red line and the dash-dotted green line represent different initial conditions, respectively. Panels (a) and (b) show the convergence performance of the gradient descent over $\mc{K}$ and the gradient descent over the controller matrices $E$, $F$, $G$ and $H$,~respectively.}
    \label{fig: example_4}
\end{figure}
\section{Conclusion and future work}
In this work, we revisited the LQG optimal control problem from a behavioral perspective. We introduced equivalent representations for the class of stochastic discrete-time, linear, time-invariant systems and the LQG optimal control problem in the space of input-output behaviors. In particular, we showed that the optimal LQG controller can be expressed as a static behavioral-feedback gain, which can be solved for directly from the LQG problem in the behavioral space. Finally, we highlighted the advantages of having a static LQG gain over a dynamic LQG controller in the context of data-driven control and gradient-based algorithms, which arise from the fact that the behavioral approach circumvents the need for a state space representation and the fact that the optimal behavioral-feedback is a static gain. 
There still remain several unexplored questions, including the investigation of the optimization landscape of the LQG problem in the behavioral space, which will pave the way for an improved understanding of the convergence properties of data-driven and gradient algorithms, as well as, for investigating the uniqueness of the optimal LQG gain.
%
%
%
%
\bibliographystyle{unsrt}
\bibliography{alias,Main,FP,New}
%
%
%
\appendix
\setcounter{section}{5} 
\subsection{Optimal LQG controller}\label{app: optimal LQG controller}
The optimal LQG controller that solves \eqref{eq: LQG cost} is written as
\begin{align}\label{eq: LQG compensator}
\begin{split}
\hat{x}(t+1) &= (I_n-\subscr{K}{kf}C)(A-B\subscr{K}{LQR})\hat{x}(t)+\subscr{K}{kf} y(t+1), \\
u(t) &= -\subscr{K}{LQR}\hat{x}(t),
\end{split}
\end{align}
where $\subscr{K}{kf}$ and $\subscr{K}{LQR}$ are the Kalman and
LQR gains, respectively. To write the controller \eqref{eq: LQG compensator} in the form of \eqref{eq: compensator}, we need the following lemma.
\begin{appxlem}{\bf \emph{(Equivalent compensator forms)}}\label{lemma: compensator forms} Consider the compensator \eqref{eq: compensator} and a compensator of the form:
\begin{align}\label{eq: compensator 2}
\begin{aligned}
\xi_c(t+1) &= \overbar{E}\xi_c(t)+\overbar{F}y(t+1), \\
u(t) &= \overbar{G}\xi_c(t),
\end{aligned}
\end{align}
with $\xi_c \in \mathbb{R}^n$ denoting the state, and $\overbar{E}\in \mathbb{R}^{n\times n}$,
$\overbar{F}\in \mathbb{R}^{n\times p}$, and $\overbar{G}\in \mathbb{R}^{m\times n}$ denoting the dynamic, input, and output matrices of the compensator, respectively. Let $x_c(0)=\xi_c(0)$ and $y(0)=0$, then, the compensators \eqref{eq: compensator} and \eqref{eq: compensator 2} output the same $u(t)$ given the same input $y(t)$ if:
\begin{align}\label{eq: compensator matrices}
\begin{split}
E=\overbar{E}, \quad F=\overbar{E} \overbar{F}, \quad G=\overbar{G}, \quad H=\overbar{G}\overbar{F}.
\end{split}
\end{align}
\end{appxlem}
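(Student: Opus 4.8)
The plan is to exhibit an explicit change of state variables relating the two compensators and then verify by induction that their state trajectories remain in correspondence, which forces the two outputs to coincide. Comparing the output equations suggests the correct correspondence: substituting $G=\overbar{G}$ and $H=\overbar{G}\overbar{F}$ into $u(t)=Gx_c(t)+Hy(t)$ gives $u(t)=\overbar{G}(x_c(t)+\overbar{F}y(t))$, which matches $u(t)=\overbar{G}\xi_c(t)$ precisely when $\xi_c(t)=x_c(t)+\overbar{F}y(t)$. This motivates the ansatz that I would show holds for all $t\geq 0$.

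First I would establish the base case. At $t=0$, the hypotheses $x_c(0)=\xi_c(0)$ and $y(0)=0$ give $x_c(0)+\overbar{F}y(0)=x_c(0)=\xi_c(0)$, so the relation $\xi_c(t)=x_c(t)+\overbar{F}y(t)$ holds at $t=0$. This is exactly where the assumption $y(0)=0$ enters: without it the two state initializations would be incompatible with the change of variables.

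Next I would carry out the inductive step. Assuming $\xi_c(t)=x_c(t)+\overbar{F}y(t)$, I substitute $E=\overbar{E}$ and $F=\overbar{E}\overbar{F}$ into the dynamics of \eqref{eq: compensator} to obtain $x_c(t+1)=\overbar{E}x_c(t)+\overbar{E}\overbar{F}y(t)=\overbar{E}(x_c(t)+\overbar{F}y(t))=\overbar{E}\xi_c(t)$. On the other hand, the dynamics of \eqref{eq: compensator 2} give $\xi_c(t+1)-\overbar{F}y(t+1)=\overbar{E}\xi_c(t)$. Comparing the two yields $x_c(t+1)=\xi_c(t+1)-\overbar{F}y(t+1)$, so the relation propagates to $t+1$, completing the induction.

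Finally, with $\xi_c(t)=x_c(t)+\overbar{F}y(t)$ established for all $t$, I substitute $G=\overbar{G}$ and $H=\overbar{G}\overbar{F}$ into the output of \eqref{eq: compensator} to get $u(t)=\overbar{G}x_c(t)+\overbar{G}\overbar{F}y(t)=\overbar{G}\xi_c(t)$, which is exactly the output of \eqref{eq: compensator 2}. Hence both compensators produce the same $u(t)$ for the same input sequence $y(t)$. There is no genuine analytical obstacle here; the only nontrivial part is identifying the correct affine change of variables $\xi_c=x_c+\overbar{F}y$ and recognizing that the role of the assumption $y(0)=0$ is solely to make this correspondence valid at initialization.
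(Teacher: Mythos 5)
Your proof is correct, but it takes a genuinely different route from the paper. The paper proves the lemma by unrolling both recursions into their input--output (convolution) form: it writes $u(t)=GE^{t}x_c(0)+\bigl[\,GE^{t-2}F \;\cdots\; GF \;\; H\,\bigr]\,y$ for the first compensator and $u(t)=\overbar{G}\,\overbar{E}^{t}\xi_c(0)+\bigl[\,\overbar{G}\,\overbar{E}^{t-1}\overbar{F} \;\cdots\; \overbar{G}\,\overbar{F}\,\bigr]\,y$ for the second, and then observes that the coefficients on the initial state and on each $y(k)$ coincide under \eqref{eq: compensator matrices}. You instead exhibit the state correspondence $\xi_c(t)=x_c(t)+\overbar{F}y(t)$ and propagate it by induction, deriving equality of the outputs as a consequence. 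Your argument buys structural insight: it explains \emph{why} the two realizations agree (they are related by an affine change of state variables), it makes the role of the hypothesis $y(0)=0$ completely transparent (it is needed only to anchor the correspondence at $t=0$), and it avoids writing out the growing coefficient matrices. The paper's coefficient-matching argument buys something different: it identifies the conditions \eqref{eq: compensator matrices} as a matching of Markov parameters (impulse-response coefficients), which is the natural viewpoint for the rest of the paper, where controllers are repeatedly expanded in terms of input--output histories (e.g., in the proof of Lemma \ref{lemma: static controller in z}), and it makes plain that the sufficiency is really a statement about the two compensators realizing the same input--output map. Both proofs are elementary and complete; there is no gap in yours.
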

\begin{proof}
Using \eqref{eq: compensator} with $y(0)=0$, we can write
\begin{align}\label{eq: input evolution 1}
u(t)=G E^t x_c(0) + 
\begin{bmatrix}
G E^{t-2} F & \cdots &GF  & H
\end{bmatrix}
y,
\end{align}
where $y=[y(1)^{\transpose},\cdots, y(t)^{\transpose}]^{\transpose}$. Using \eqref{eq: compensator 2}, we can write
\begin{align}\label{eq: input evolution 2}
u(t)=\overbar{G} \overbar{E}^t \xi_c(0) + 
\begin{bmatrix}
\overbar{G} \overbar{E}^{t-1} \overbar{F} & \cdots & \overbar{G}\overbar{F}
\end{bmatrix}
y.
\end{align}
Under the same $y$, \eqref{eq: input evolution 1} is equal to \eqref{eq: input evolution 2} for $E=\overbar{E}$, $F=\overbar{E} \overbar{F}$, $G=\overbar{G}$, and $H=\overbar{G}\overbar{F}$.
\end{proof}
Using Lemma \ref{lemma: compensator forms} and \eqref{eq: LQG compensator}, we can write the LQG controller in the form of \eqref{eq: compensator} with
\begin{align}\label{eq: state space LQG}
\begin{split}
E&=(I_n-\subscr{K}{kf}C)(A-B\subscr{K}{LQR}),\\
F&=(I_n-\subscr{K}{kf}C)(A-B\subscr{K}{LQR})\subscr{K}{kf},\\
G&=-\subscr{K}{LQR},\\
H&=-\subscr{K}{LQR}\subscr{K}{kf}.
\end{split}
\end{align}
\subsection{System representation in the behavioral space}\label{app: behavioral dynamics}
The following Lemma provides an equivalent representation of
\eqref{eq: system in x} in the behavioral space, $z$, which is written in~\eqref{eq: z dynamics}.

\begin{appxlem}{\bf \emph{(Equivalent dynamics)}}\label{lemma:
    system in z}
  Let $z$ be as in~\eqref{eq: z}. Then,
  \begin{align*}
    z(t+1) = \mc{A} z(t) + \mc B_u u(t) + \mc B_w w(t) + \mc B_v
    v(t+1),
  \end{align*}
  where $\mc A$, $\mc B_u$, $\mc B_w$, and $\mc B_v$ are as in
  \eqref{eq: z dynamics}, and 

  \begin{align*}
\mc{A}_u&\triangleq\mc{F}_2-CA^{n+1}\mc{O}^{\dagger} \mc{F}_1, \qquad \mc{A}_y\triangleq CA^{n+1}\mc{O}^{\dagger},\\
\mc{A}_w&\triangleq\mc{F}_4-CA^{n+1}\mc{O}^{\dagger}\mc{F}_3, \qquad \mc{A}_v \triangleq -CA^{n+1}\mc{O}^{\dagger},
  \end{align*}
  \begin{align*}
    \mc{O}&\triangleq \left[\begin{smallarray}{c}
      C\\
      CA\\
      \vdots \\
      CA^n
    \end{smallarray}\right], \quad 
    \mc{F}_1\triangleq\left[\begin{smallarray}{ccc}
      0    & \cdots & 0\\
      CB & \cdots & 0\\
      \vdots  & \ddots & \vdots \\
      CA^{n-1}B & \cdots & CB
    \end{smallarray}\right],\\
    \mc{F}_2&\triangleq\left[\begin{smallarray}{ccc}
      CA^nB & \cdots & CAB
    \end{smallarray}\right],
  \end{align*}
  and the matrices $\mc{F}_3$ and $\mc{F}_4$ are obtained by replacing
  $B$ with $I$ in $\mc{F}_1$ and $\mc{F}_2$, respectively.
\end{appxlem}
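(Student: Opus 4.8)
The plan is to separate the block rows of the claimed dynamics \eqref{eq: z dynamics} into the trivially shifted coordinates and the single nontrivial coordinate, namely the newest output $y(t+1)$. Every block of $z(t+1)$ other than $y(t+1)$ is obtained from $z(t)$ by relabeling: the entries of $U$, $W$, $V$ and the subvector $[y(t-n+1)^{\T},\dots,y(t)^{\T}]^{\T}$ of $Y(t+1)$ simply reproduce components already present in $z(t)$, or equal the freshly injected signals $u(t)$, $w(t)$, $v(t+1)$. These rows account for the shift-identity pattern of $\mc A$ and for the corresponding columns of $\mc B_u$, $\mc B_w$, $\mc B_v$, so it suffices to verify the row expressing $y(t+1)$.

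First I would write, directly from \eqref{eq: system in x}, the identity $y(t+1) = CAx(t) + CBu(t) + Cw(t) + v(t+1)$, which already exhibits the correct input coefficients $CB$, $C$, $I$ appearing in the last block row of $[\mc B_u\ \mc B_w\ \mc B_v]$. The remaining task is to express $CAx(t)$ through the behavioral data stored in $z(t)$. To this end I would stack the measurement equation over the window $t-n,\dots,t$ to obtain $Y(t) = \mc O\, x(t-n) + \mc F_1 U(t-1) + \mc F_3 W(t-1) + V(t)$, where $\mc O$ is the observability matrix and $\mc F_1$, $\mc F_3$ are the lower-triangular Toeplitz matrices built from the blocks $CA^{k}B$ and $CA^{k}$, respectively; this identity follows by iterating $x(\tau+1)=Ax(\tau)+Bu(\tau)+w(\tau)$ across the window.

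Next I would invoke observability of $(A,C)$: since the window has length $n+1$, the matrix $\mc O$ has full column rank and $\mc O^{\dagger}\mc O = I_n$, so the previous relation can be solved exactly for the past state, $x(t-n) = \mc O^{\dagger}\big(Y(t) - \mc F_1 U(t-1) - \mc F_3 W(t-1) - V(t)\big)$. Propagating this forward $n$ steps yields $CAx(t) = CA^{n+1} x(t-n) + \mc F_2 U(t-1) + \mc F_4 W(t-1)$, where $\mc F_2$, $\mc F_4$ collect the intervening input and process-noise contributions over $[t-n,t-1]$. Substituting the expression for $x(t-n)$ and grouping the coefficients of $U(t-1)$, $Y(t)$, $W(t-1)$, $V(t)$ would then reproduce exactly $\mc A_u = \mc F_2 - CA^{n+1}\mc O^{\dagger}\mc F_1$, $\mc A_y = CA^{n+1}\mc O^{\dagger}$, $\mc A_w = \mc F_4 - CA^{n+1}\mc O^{\dagger}\mc F_3$ and $\mc A_v = -CA^{n+1}\mc O^{\dagger}$, completing the last row and hence assembling the full matrices.

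I expect the main obstacle to be purely bookkeeping: aligning the index shifts across the stacked windows so that the Toeplitz patterns of $\mc F_1,\dots,\mc F_4$ and the observability blocks match the precise definitions of $U(t-1)$, $Y(t)$, $W(t-1)$, $V(t)$. The one genuinely conceptual ingredient, as opposed to computation, is the exact left-invertibility $\mc O^{\dagger}\mc O = I_n$ afforded by observability, which guarantees that $x(t-n)$—and therefore the entire representation—is recovered exactly from the stored inputs, outputs and realized noise values, rather than merely in a least-squares sense.
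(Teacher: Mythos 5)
Your proposal is correct and follows essentially the same route as the paper's proof: both express $y(t+1)$ in terms of $x(t-n)$ plus the windowed inputs and noises, stack the measurement equation to write $Y(t)=\mc{O}x(t-n)+\mc{F}_1 U(t-1)+\mc{F}_3 W(t-1)+V(t)$, and eliminate $x(t-n)$ via $\mc{O}^{\dagger}$ to recover $\mc{A}_u$, $\mc{A}_y$, $\mc{A}_w$, $\mc{A}_v$. Your explicit remark that observability guarantees $\mc{O}^{\dagger}\mc{O}=I_n$, so the state is recovered exactly rather than in a least-squares sense, is a point the paper leaves implicit.
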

\begin{proof}
We can write the evolution of $y(t+1)$ as
\begin{align} \label{eq: evolution of y}
y(t+1)=&CA^{n+1}x(t-n) +\mc{F}_2 U(t-1) + \mc{F}_4 W(t-1)\nonumber \\
 &+ CBu(t) +Cw(t) +v(t+1),
\end{align}
where, $\mc{F}_2$ and $\mc{F}_4$ are as in Lemma \ref{lemma: system in z}, and $U(t-1)$ and $W(t-1)$ are as in \eqref{eq: z}. Also, we can write $Y(t)$ in \eqref{eq: z} in terms of $U(t-1)$, $W(t-1)$, and $V(t)$:
\begin{align}\label{eq: evolution of Y}
Y(t)=\mc{O} x(t-n) + \mc{F}_1 U(t-1) + \mc{F}_3 W(t-1) + V(t),
\end{align}
where $\mc{O}$, $\mc{F}_1$, and $\mc{F}_3$ are same as in Lemma \ref{lemma: system in z} and $V(t)$ is as in \eqref{eq: z}. Then using \eqref{eq: evolution of Y}, we substitute $x(t-n)$ into \eqref{eq: evolution of y} to get
\begin{align*}
y(t+1)=&\begin{bmatrix}
\mc{A}_u & \mc{A}_y & \mc{A}_w & \mc{A}_v
\end{bmatrix}\underbrace{\begin{bmatrix}
U(t-1)\\
Y(t)\\
W(t-1)\\
V(t)
\end{bmatrix}}_{z(t)}+CB u(t)\\
& +C w(t) +v(t+1),
\end{align*}
where $\mc{A}_u$, $\mc{A}_y$, $\mc{A}_w$, and $\mc{A}_v$ are as in Lemma \ref{lemma: system in z}.
\end{proof}

\subsection{From dynamic to static controller}\label{app: dynamic to static controller}

\begin{appxlem}{\bf \emph{(From dynamic to static
      controllers)}}\label{lemma: static controller in z}
  Let the control input $u$ be the output of the dynamic controller
  \eqref{eq: compensator}. Then, equivalently,
  \begin{align}\label{eq: static controller in z}
    u(t)=
    \begin{bmatrix}
        GE^{n}\mc{T}_1^{\dagger} & \mc{T}_2 - GE^{n}\mc{T}_1^{\dagger} \mc{M}
      \end{bmatrix}
                                   y_z(t),
  \end{align}
  where $y_z$ is as in \eqref{eq: z dynamics}, and
  \begin{align*}
    \mc{T}_1&\triangleq\left[\begin{smallarray}{c}
      G\\
      GE\\
      \vdots\\
      GE^{n-1}
    \end{smallarray}\right], \quad
    \mc{T}_2\triangleq\left[\begin{smallarray}{cccc}
      GE^{n-1}F & \cdots & GF & H
    \end{smallarray}\right],\\
    \mc{M}&\triangleq\left[\begin{smallarray}{cccccc}
      H & 0 & 0 &\cdots & 0 & 0\\
      \vspace{0.2cm}
      GF & H & 0 & \cdots & 0 & 0 \\
      GEF & GF & H & \cdots & 0 & 0\\
      \vspace{0.2cm}
      \vdots & \vdots  & \vdots &\ddots & \vdots & \vdots\\
      GE^{n-2}F& GE^{n-3}F & \cdots& \cdots &H & 0
    \end{smallarray}\right].
  \end{align*}
\end{appxlem}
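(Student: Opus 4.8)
The plan is to reconstruct the compensator state $x_c(t-n)$ from the window of past controller inputs and outputs recorded in $y_z(t)$, and then propagate it forward to time $t$. Throughout I take $t\geq n$, so that the entire window $[t-n,t-1]$ of signals is available; no special initial condition is needed, since the relations I derive follow directly from iterating \eqref{eq: compensator}.

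First I would unroll the compensator over the window $[t-n,t-1]$. Iterating the state update $x_c(t-n+j)=E x_c(t-n+j-1)+F y(t-n+j-1)$ and substituting into $u(t-n+j)=G x_c(t-n+j)+H y(t-n+j)$ gives, for each $j=0,\dots,n-1$,
\begin{align*}
u(t-n+j) = GE^{j} x_c(t-n) + \sum_{k=0}^{j-1} GE^{\,j-1-k} F\, y(t-n+k) + H\, y(t-n+j).
\end{align*}
Stacking these $n$ identities yields the compact relation $U(t-1)=\mc{T}_1 x_c(t-n)+\mc{M}Y(t)$: the coefficients $G,GE,\dots,GE^{n-1}$ of $x_c(t-n)$ assemble into $\mc{T}_1$, while the lower-triangular coefficients acting on the $y$-history assemble into $\mc{M}$. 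Its last block-column, the coefficient of $y(t)$, vanishes because every $u(t-n+j)$ with $j\leq n-1$ involves $y$ no later than $y(t-1)$, matching the stated form of $\mc{M}$.

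Next I would propagate one window forward. The same iteration gives $x_c(t)=E^{n}x_c(t-n)+\sum_{k=0}^{n-1} E^{\,n-1-k}F\, y(t-n+k)$, so that $u(t)=G x_c(t)+H y(t)=GE^{n}x_c(t-n)+\mc{T}_2 Y(t)$, with $\mc{T}_2$ exactly the row of coefficients $GE^{n-1}F,\dots,GF,H$ acting on $Y(t)$. It remains to eliminate $x_c(t-n)$: left-multiplying $U(t-1)-\mc{M}Y(t)=\mc{T}_1 x_c(t-n)$ by $\mc{T}_1^{\dagger}$ and substituting produces $u(t)=GE^{n}\mc{T}_1^{\dagger}\big(U(t-1)-\mc{M}Y(t)\big)+\mc{T}_2 Y(t)$, which on collecting terms and recalling that $y_z(t)=[U(t-1)^{\T},Y(t)^{\T}]^{\T}$ is precisely \eqref{eq: static controller in z}.

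The delicate step, and the main obstacle, is justifying the replacement $GE^{n}\mc{T}_1^{\dagger}\mc{T}_1 x_c(t-n)=GE^{n}x_c(t-n)$, since $\mc{T}_1$ need not be left-invertible (the compensator pair $(E,G)$ need not be observable), so $\mc{T}_1^{\dagger}\mc{T}_1$ is only the orthogonal projection onto the row space of $\mc{T}_1$ rather than the identity. The resolution is the Cayley--Hamilton theorem: $E^{n}$ is a linear combination of $I,E,\dots,E^{n-1}$, hence every row of $GE^{n}$ lies in the row space of $\mc{T}_1=[G^{\T},(GE)^{\T},\dots,(GE^{n-1})^{\T}]^{\T}$. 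Therefore $GE^{n}$ is invariant under right-multiplication by the projection $\mc{T}_1^{\dagger}\mc{T}_1$, and the elimination is exact independently of observability. I would state this Cayley--Hamilton argument carefully, as it is exactly what makes the pseudoinverse the correct object and the resulting static gain well defined.
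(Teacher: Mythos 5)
Your proof is correct, and its skeleton is the same as the paper's: both establish the two identities $u(t)=GE^{n}x_c(t-n)+\mc{T}_2Y(t)$ and $U(t-1)=\mc{T}_1x_c(t-n)+\mc{M}Y(t)$, and then eliminate $x_c(t-n)$ via $\mc{T}_1^{\dagger}$. Where you genuinely add something is the final step. The paper simply "substitutes $x_c(t-n)$" using the pseudoinverse, which is literally exact only when $\mc{T}_1$ has full column rank, i.e., when the compensator pair $(E,G)$ is observable (an assumption the paper also leans on implicitly in the proof of the sparsity lemma, where it writes $\mc{T}_1^{\dagger}=(\mc{T}_1^{\T}\mc{T}_1)^{-1}\mc{T}_1^{\T}$). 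Your Cayley--Hamilton argument removes this hypothesis: since $E\in\real^{n\times n}$, one can write $GE^{n}=\begin{bmatrix}a_0 I & a_1 I & \cdots & a_{n-1}I\end{bmatrix}\mc{T}_1$ for the characteristic-polynomial coefficients $a_k$, and then the Moore--Penrose identity $\mc{T}_1\mc{T}_1^{\dagger}\mc{T}_1=\mc{T}_1$ gives $GE^{n}\mc{T}_1^{\dagger}\mc{T}_1=GE^{n}$, so the elimination is exact even when $\mc{T}_1^{\dagger}\mc{T}_1$ is only a projection. This makes the static gain in \eqref{eq: static controller in z} well defined for every compensator of the form \eqref{eq: compensator}, observable or not, which is a strictly stronger (and cleaner) justification than the one in the paper.
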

\smallskip
\begin{proof}
Using \eqref{eq: compensator}, we can write
\begin{align}\label{eq: evolution of u}
u(t)=GE^n x_c(t-n) + \mc{T}_2 Y(t),
\end{align}
where $\mc{T}_2$ and $Y(t)$ are as in Lemma \ref{lemma: static controller in z} and \eqref{eq: z}, respectively. Further, we can write $U(t-1)$ in \eqref{eq: z} as
\begin{align}\label{eq: evolution of U}
U(t-1)=\mc{T}_1 x_c(t-n) + \mc{M} Y(t),
\end{align} 
where $\mc{T}_1$ and $\mc{M}$ are as in Lemma \ref{lemma: static controller in z}. Using \eqref{eq: evolution of U} we substitute $x_c(t-n)$ into \eqref{eq: evolution of u} to get
\begin{align*}
u(t)=\begin{bmatrix}
GE^n \mc{T}_1^{\dagger} & \mc{T}_2-GE^n\mc{T}_1^{\dagger} \mc{M}
\end{bmatrix} \underbrace{\begin{bmatrix}
U(t-1)\\
Y(t)
\end{bmatrix}}_{y_z}.
\end{align*}
\end{proof}
\subsection{LQG problem in the behavioral space}\label{app: LQG problem in the behavioral space}

\begin{appxlem}{\bf \emph{(LQG problem in the behavioral
      space)}}\label{lemma: lqg in z}
  The input $u^*$ is the minimizer of \eqref{eq: LQG cost} subject to
  \eqref{eq: system in x} if and only if it is the minimizer of
  \begin{align}\label{eq: LQG cost z app}
    \mc{J}_z\triangleq\lim_{T\rightarrow \infty}\mathbb{E} \left[
    \frac{1}{T}\Big(\sum_{t=0}^{T-1}z(t)^{\T}Q_z z(t) +
    u(t)^{\T} R_u u(t)\Big) \right] 
  \end{align}
  subject to \eqref{eq: z dynamics},
  where $Q_z=\mc{H}^{\T}Q_x\mc{H}$ and
  \begin{align*}
    \mc{H}&\triangleq\begin{bmatrix}
      \mc{G}_1-A^n \mc{O}^{\dagger}\mc{F}_1 & A^n \mc{O}^{\dagger} & \mc{G}_2-A^n \mc{O}^{\dagger}\mc{F}_3 & -A^n \mc{O}^{\dagger}
    \end{bmatrix},\\
    \mc{G}_1&\triangleq\begin{bmatrix}
      A^{n-1}B & \cdots & B
    \end{bmatrix}, \quad
                          \mc{G}_2\triangleq\begin{bmatrix}
                            A^{n-1} & \cdots & I_n
                          \end{bmatrix}.
  \end{align*}
\end{appxlem}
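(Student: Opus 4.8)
The plan is to reduce the statement to a single pointwise identity between the two instantaneous costs. Since the input-penalty term $u(t)^{\T}R_u u(t)$ appears identically in \eqref{eq: LQG cost} and \eqref{eq: LQG cost z app}, it suffices to show that along every trajectory of \eqref{eq: system in x} (equivalently \eqref{eq: z dynamics}) one has $x(t)^{\T}Q_x x(t) = z(t)^{\T}Q_z z(t)$ with $Q_z=\mc{H}^{\T}Q_x\mc{H}$. This in turn follows from the pointwise relation $x(t)=\mc{H}z(t)$, which says the true state is an exact linear function of the behavioral variables. Once $\mc{J}$ and $\mc{J}_z$ are seen to be the same functional of the control input, the two optimization problems trivially share the same minimizers, yielding the claimed ``if and only if.''

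First I would iterate the state recursion in \eqref{eq: system in x} for $n$ steps to write $x(t)=A^n x(t-n)+\mc{G}_1 U(t-1)+\mc{G}_2 W(t-1)$, where $\mc{G}_1=[A^{n-1}B,\cdots,B]$ and $\mc{G}_2=[A^{n-1},\cdots,I_n]$ simply collect the accumulated control and process-noise contributions over the window, matching the orderings of $U(t-1)$ and $W(t-1)$ in \eqref{eq: z}. Next I would invoke the output relation \eqref{eq: evolution of Y} already established in the proof of Lemma~\ref{lemma: system in z}, namely $Y(t)=\mc{O}x(t-n)+\mc{F}_1 U(t-1)+\mc{F}_3 W(t-1)+V(t)$. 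Because $(A,C)$ is observable, $\mc{O}$ has full column rank, so $\mc{O}^{\dagger}\mc{O}=I_n$; since the vector $Y(t)-\mc{F}_1 U(t-1)-\mc{F}_3 W(t-1)-V(t)$ lies in $\Image(\mc{O})$ by construction, it can be back-solved exactly as $x(t-n)=\mc{O}^{\dagger}\left(Y(t)-\mc{F}_1 U(t-1)-\mc{F}_3 W(t-1)-V(t)\right)$.

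Substituting this expression for $x(t-n)$ into the $n$-step recursion and collecting the coefficients of $U(t-1)$, $Y(t)$, $W(t-1)$, and $V(t)$ produces precisely the four blocks $\mc{G}_1-A^n\mc{O}^{\dagger}\mc{F}_1$, $A^n\mc{O}^{\dagger}$, $\mc{G}_2-A^n\mc{O}^{\dagger}\mc{F}_3$, and $-A^n\mc{O}^{\dagger}$ that define $\mc{H}$, so $x(t)=\mc{H}z(t)$ and hence $x(t)^{\T}Q_x x(t)=z(t)^{\T}\mc{H}^{\T}Q_x\mc{H}z(t)=z(t)^{\T}Q_z z(t)$. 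Finally I would note that the two functionals differ only in the lower summation limit, a discrepancy of finitely many terms that vanishes under the $\tfrac{1}{T}$ time-average as $T\to\infty$, so $\mc{J}=\mc{J}_z$ identically in the input and the minimizers coincide. I expect the only delicate point to be the \emph{exactness} of the state reconstruction: it rests entirely on observability of $(A,C)$ guaranteeing $\mc{O}^{\dagger}\mc{O}=I_n$, and on the reconstructed vector lying in $\Image(\mc{O})$; once these are in place the remainder is bookkeeping in assembling $\mc{H}$.
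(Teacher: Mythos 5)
Your proposal is correct and follows essentially the same route as the paper's proof: iterate the state recursion to get $x(t)=A^n x(t-n)+\mc{G}_1 U(t-1)+\mc{G}_2 W(t-1)$, back-solve $x(t-n)$ from \eqref{eq: evolution of Y} via $\mc{O}^{\dagger}$, conclude $x(t)=\mc{H}z(t)$ so the costs coincide, and then invoke the system equivalence of Lemma~\ref{lemma: system in z} to identify the minimizers. If anything, you are slightly more careful than the paper in making explicit the observability condition $\mc{O}^{\dagger}\mc{O}=I_n$ that renders the reconstruction exact, and in noting that the finitely many terms affected by the lower summation limit vanish under the $\tfrac{1}{T}$ average.
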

\smallskip
\begin{proof}
We begin by proving that the costs in \eqref{eq: LQG cost} and \eqref{eq: LQG cost z app} are equivalent. We can express $x(t)$ for $t\geq n$ as
\begin{align}\label{eq: evolution of x}
x(t)=A^n x(t-n) +\mc{G}_1 U(t-1) + \mc{G}_2 W(t-1),
\end{align}
where $\mc{G}_1$ and $\mc{G}_2$ are as in Lemma \ref{lemma: lqg in z}, and $U(t-1)$ and $W(t-1)$ are as in \eqref{eq: z}. Using \eqref{eq: evolution of Y}, we can substitute $\left. x(t-n)\right.$ in terms of $U(t-1)$, $Y(t)$, $W(t-1)$, and $V(t)$ into \eqref{eq: evolution of x} to get $x(t)=\mc{H} z(t)$, where $\mc{H}$ is as in Lemma \ref{lemma: lqg in z}. Substituting $x(t)=\mc{H} z(t)$ into the cost \eqref{eq: LQG cost} yields the cost \eqref{eq: LQG cost z  app}. Further, Lemma \ref{lemma: system in z} shows that the systems \eqref{eq: system in x} and \eqref{eq: z dynamics} are equivalent. Therefore, the minimizer of \eqref{eq: LQG cost} subject to \eqref{eq: system in x} is the minimizer of \eqref{eq: LQG cost z  app} subject to \eqref{eq: z dynamics}.
\end{proof}

\subsection{Proof of Theorem \ref{thrm: solution of lqg in z}} \label{app: proof of theorem}

For the proof of Theorem \ref{thrm: solution of lqg in z}, we need the following technical results from the literature. 
\begin{appxlem}{\bf \emph{(Steady-state cost)}}\label{lemma: steady-state cost}
For a controller $u(t)=\mc{K}y_z(t)$ with stabilizing gain $\mathcal{K}$, the cost \eqref{eq: LQG cost z} at steady-state is written as
 \begin{align}\label{eq: LQG cost z steady-state}
 \mathcal{J}_z(\mathcal{K})=\Tr{Q_{\mathcal{K}}P},
 \end{align}
 where $Q_{\mathcal{K}}\triangleq Q_z+\mathcal{C}^{\T}\mathcal{K}^{\T}R_u\mathcal{K}\mathcal{C}$, and $P\succeq 0$ is the unique solution of the following Lyapunov equation
 \begin{align}\label{eq: P lyap}
 P= \mathcal{A}_c P\mathcal{A}_c^{\T}+\mathcal{B}_w Q_w \mathcal{B}_w^{\T}+\mathcal{B}_v R_v \mathcal{B}_v^{\T}.
 \end{align}
 with $\mathcal{A}_c \triangleq \mathcal{A}+\mathcal{B}_u\mathcal{K}\mathcal{C}$.
\end{appxlem}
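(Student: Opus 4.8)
The plan is to exploit the fact that, under the static feedback $u(t)=\mc{K}\mc{C}z(t)$, the behavioral system \eqref{eq: z dynamics} becomes an autonomous linear recursion driven purely by white noise, whose steady-state second moment is governed by a discrete Lyapunov equation. First I would substitute $u(t)=\mc{K}y_z(t)=\mc{K}\mc{C}z(t)$ into the dynamics of Lemma~\ref{lemma: system in z} to obtain the closed loop
\begin{align*}
z(t+1)=\mc{A}_c\, z(t)+\mc{B}_w w(t)+\mc{B}_v v(t+1),\qquad \mc{A}_c=\mc{A}+\mc{B}_u\mc{K}\mc{C}.
\end{align*}
At the same time, the per-stage cost collapses to a pure quadratic in $z$: since $u(t)^{\T}R_u u(t)=z(t)^{\T}\mc{C}^{\T}\mc{K}^{\T}R_u\mc{K}\mc{C}\,z(t)$, the running cost equals $z(t)^{\T}Q_{\mc{K}}z(t)$ with $Q_{\mc{K}}=Q_z+\mc{C}^{\T}\mc{K}^{\T}R_u\mc{K}\mc{C}$, so that $\mc{J}_z(\mc{K})=\lim_{T\to\infty}\tfrac{1}{T}\sum_{t=n}^{T-1}\expect{z(t)^{\T}Q_{\mc{K}}z(t)}$.

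Next I would track the second moment $P(t)\triangleq\expect{z(t)z(t)^{\T}}$. Expanding $z(t+1)z(t+1)^{\T}$ from the closed-loop recursion and taking expectations produces the diagonal terms $\mc{A}_c P(t)\mc{A}_c^{\T}$, $\mc{B}_w Q_w\mc{B}_w^{\T}$ and $\mc{B}_v R_v\mc{B}_v^{\T}$ together with three cross terms. The crucial observation---and the one place the argument must be made carefully---is that all three cross terms vanish: the block $z(t)$ contains only the noise histories $w(t-n),\dots,w(t-1)$ and $v(t-n),\dots,v(t)$, whereas the freshly injected noises are $w(t)$ and $v(t+1)$. Because $w$ and $v$ are i.i.d. (white) and mutually independent, $w(t)$ and $v(t+1)$ are independent of $z(t)$ and of one another, hence every cross-covariance is zero. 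This yields the exact recursion
\begin{align*}
P(t+1)=\mc{A}_c P(t)\mc{A}_c^{\T}+\mc{B}_w Q_w\mc{B}_w^{\T}+\mc{B}_v R_v\mc{B}_v^{\T}.
\end{align*}

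Finally, since $\mc{K}$ is stabilizing we have $\rho(\mc{A}_c)<1$, so this recursion admits a unique fixed point given by the convergent series $P=\sum_{k\ge 0}\mc{A}_c^{k}\big(\mc{B}_w Q_w\mc{B}_w^{\T}+\mc{B}_v R_v\mc{B}_v^{\T}\big)(\mc{A}_c^{\T})^{k}$; this $P$ is precisely the unique solution of \eqref{eq: P lyap}, is positive semidefinite as a convergent sum of positive-semidefinite terms, and $P(t)\to P$ independently of the initial condition. It then remains only to pass to the limit in the cost: using $\expect{z(t)^{\T}Q_{\mc{K}}z(t)}=\Tr{Q_{\mc{K}}P(t)}$ together with $P(t)\to P$, each summand converges to $\Tr{Q_{\mc{K}}P}$, and since the Cesàro mean of a convergent sequence converges to the same limit, we obtain $\mc{J}_z(\mc{K})=\Tr{Q_{\mc{K}}P}$. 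I expect the main obstacle to be not the algebra but the careful justification of the vanishing cross terms through the white-noise and independence structure of the behavioral state $z$; once that is secured, existence, uniqueness, and positive semidefiniteness of $P$ follow from standard discrete Lyapunov theory for Schur-stable $\mc{A}_c$.
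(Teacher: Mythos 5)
Your proposal is correct and follows essentially the same route as the paper's proof: substitute the static feedback to obtain the closed-loop recursion $z(t+1)=\mc{A}_c z(t)+\mc{B}_w w(t)+\mc{B}_v v(t+1)$, derive the second-moment Lyapunov recursion using the independence of $z(t)$, $w(t)$, and $v(t+1)$, invoke Schur stability of $\mc{A}_c$ for convergence to the unique solution $P$ of \eqref{eq: P lyap}, and pass to the limit in the time-averaged cost via the trace identity. If anything, you are more explicit than the paper on two points it leaves implicit---the vanishing of the cross terms and the Ces\`aro-mean argument converting per-step convergence of $\Tr{Q_{\mc{K}}P(t)}$ into convergence of the time average.
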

\smallskip
\begin{proof}
Since $u(t)=\mc{K}y_z(t)$ is stabilizing, the closed-loop matrix $\mc{A}_c=\mc{A}+\mc{B}_u \mc{K}\mc{C}$ is stable. We can write
\begin{align*}
\expect{z(t)z(t)^{\T}}=&\mc{A}_c\expect{z(t-1)z(t-1)^{\T}}\mc{A}_c^{\T}+\mc{B}_w Q_w \mc{B}_w^{\T}\\
 &+\mc{B}_v R_v \mc{B}_v^{\T},
\end{align*}
where we have used the fact that $z(t-1)$, $w(t-1)$, and $v(t)$ are uncorrelated, and $\expect{w(t-1)w(t-1)^{\T}}=Q_w$ and $\expect{v(t)v(t)^{\T}}=R_v$. Since $\mc{A}_c$ is stable, $\underset{t\rightarrow \infty}{\lim} \expect{z(t)z(t)^{\T}}$ converges to a finite value, and at steady state we have $P\triangleq \underset{t\rightarrow \infty}{\lim} \expect{z(t)z(t)^{\T}}=\underset{t\rightarrow \infty}{\lim} \expect{z(t-1)z(t-1)^{\T}}$, where $P$ satisfies \eqref{eq: P lyap}. The cost  \eqref{eq: LQG cost z} is written as
\begin{align*}
\mc{J}_z& \triangleq \lim_{T\rightarrow \infty}\mathbb{E} \left[\frac{1}{T}\left(\sum_{t=0}^{T-1}z(t)^{\T}\left(Q_z+\mc{C}^{\T}\mc{K}^{\T}R_u\mc{K}\mc{C}\right)z(t)\right) \right]\\
=&\lim_{t\rightarrow \infty}\expect{\Tr{z(t)^{\T}Q_{\mc{K}}z(t)} }= \Tr{Q_{\mc{K}}\lim_{t\rightarrow \infty}\expect{z(t)z(t)^{\T}}}\\
=&\Tr{Q_{\mc{K}}P},
\end{align*}
where $Q_{\mc{K}}\triangleq Q_z+\mc{C}^{\T}\mc{K}^{\T}R_u\mc{K}\mc{C}$. The proof is complete.
\end{proof}
\smallskip
\begin{appxlem}{\bf \emph{(Property of the solution to Lyapunov
      equation, \cite{AAALM-VK-FP:19b})}}\label{lemma: tech_res}
  Let $A$, $B$, $Q$ be matrices of appropriate dimensions with
  $\rho(A)<1$. Let $Y$ satisfy $Y = AYA^{\T} + Q$. Then,
  $\Tr{BY} = \Tr{Q^{\T}M}$, where $M$ satisfies
  $M = A^{\T}MA + B^{\T}$.\oprocend
\end{appxlem}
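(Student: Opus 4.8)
The plan is to exploit the hypothesis $\rho(A)<1$, which makes both discrete-time Lyapunov equations uniquely solvable and lets me write their solutions as absolutely convergent matrix power series. First I would record
\begin{align*}
Y=\sum_{k=0}^{\infty}A^k Q (A^{\T})^k,\qquad M=\sum_{k=0}^{\infty}(A^{\T})^k B^{\T} A^k,
\end{align*}
and verify by substitution and reindexing that these satisfy $Y=AYA^{\T}+Q$ and $M=A^{\T}MA+B^{\T}$, respectively. Uniqueness is not an issue: the linear map $X\mapsto AXA^{\T}$ has spectral radius $\rho(A)^2<1$ (its eigenvalues are the products of pairs of eigenvalues of $A$), so the identity minus this map is invertible, and likewise for $X\mapsto A^{\T}XA$.

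The core of the argument is then a pair of parallel trace computations. Substituting the series for $Y$ into $\Tr{BY}$ and interchanging the sum with the trace, the cyclic invariance of the trace gives
\begin{align*}
\Tr{BY}=\sum_{k=0}^{\infty}\Tr{B A^k Q (A^{\T})^k}=\sum_{k=0}^{\infty}\Tr{(A^{\T})^k B A^k Q}.
\end{align*}
On the other side, substituting the series for $M$ into $\Tr{Q^{\T}M}$ and applying $\Tr{X}=\Tr{X^{\T}}$ to each summand (so that the transpose reverses the factor order and turns $Q^{\T},B^{\T},(A^{\T})^k$ into $Q,B,A^k$) yields
\begin{align*}
\Tr{Q^{\T}M}=\sum_{k=0}^{\infty}\Tr{Q^{\T}(A^{\T})^k B^{\T} A^k}=\sum_{k=0}^{\infty}\Tr{(A^{\T})^k B A^k Q}.
\end{align*}
Comparing the two displays term by term gives $\Tr{BY}=\Tr{Q^{\T}M}$, which is the claim.

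The delicate points are bookkeeping rather than conceptual. I must justify the termwise interchange of trace and summation, which follows from absolute convergence: $\rho(A)<1$ yields a geometric bound $\|A^k\|\le c\,\gamma^k$ for some $\gamma\in(\rho(A),1)$, so each series converges in norm and the trace (a bounded linear functional) commutes with the sum. The step I expect to be most error-prone is the transpose manipulation in the $\Tr{Q^{\T}M}$ computation, where one must track the order reversal carefully so that every term lands on the identical scalar $\Tr{(A^{\T})^k B A^k Q}$. A cleaner, series-free alternative is to introduce the operator $\mathcal{L}(X)=AXA^{\T}$ with trace-adjoint $\mathcal{L}^{*}(Z)=A^{\T}ZA$, write $Y=(\mathcal{I}-\mathcal{L})^{-1}(Q)$ and $M=(\mathcal{I}-\mathcal{L}^{*})^{-1}(B^{\T})$, and conclude from the adjoint identity $\langle B^{\T},(\mathcal{I}-\mathcal{L})^{-1}Q\rangle=\langle(\mathcal{I}-\mathcal{L}^{*})^{-1}B^{\T},Q\rangle$ with the inner product $\langle X,Z\rangle=\Tr{X^{\T}Z}$; this packages the transpose bookkeeping into the adjoint but requires first setting up the operator framework.
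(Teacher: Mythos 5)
Your proof is correct. Note that the paper itself does not prove this lemma at all: it is imported verbatim from a prior work (the citation \cite{AAALM-VK-FP:19b}) and stated without argument, so there is no in-paper proof to compare against; your write-up supplies the missing self-contained justification. The argument is sound at every step: under $\rho(A)<1$ the two Lyapunov equations indeed have the unique solutions $Y=\sum_{k\geq 0}A^{k}Q(A^{\T})^{k}$ and $M=\sum_{k\geq 0}(A^{\T})^{k}B^{\T}A^{k}$ (uniqueness via the eigenvalues $1-\lambda_i\lambda_j$ of the operator $X\mapsto X-AXA^{\T}$ being nonzero), the cyclic-trace identity $\Tr{BA^{k}Q(A^{\T})^{k}}=\Tr{(A^{\T})^{k}BA^{k}Q}$ and the transpose identity $\Tr{Q^{\T}(A^{\T})^{k}B^{\T}A^{k}}=\Tr{(A^{\T})^{k}BA^{k}Q}$ both check out, and the interchange of trace and summation is legitimate in finite dimensions given the geometric bound $\|A^{k}\|\leq c\gamma^{k}$. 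Your alternative formulation via the trace inner product $\langle X,Z\rangle=\Tr{X^{\T}Z}$, with $M=(\mathcal{I}-\mathcal{L}^{*})^{-1}(B^{\T})$ the image of $B^{\T}$ under the inverse of the adjoint of $\mathcal{I}-\mathcal{L}$, is the cleaner way to see why the result holds: the lemma is precisely the statement that the solution operators of the two Lyapunov equations are adjoints of one another, which is also the mechanism that makes the lemma useful in the proof of Theorem \ref{thrm: solution of lqg in z}, where it converts the differential $\Tr{Q_{\mc{K}}dP}$ (a statement about the ``controllability-type'' solution $P$) into a trace against the ``observability-type'' solution $M$.
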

\begin{pfof}{Theorem \ref{thrm: solution of lqg in z}}
Using Lemma \ref{lemma: steady-state cost}, we can write the cost \eqref{eq: LQG cost z} at steady-state as \eqref{eq: LQG cost z steady-state}. Next, we compute the derivative of $\mathcal{J}_z(\mathcal{K})$ with respect to the variable $\mathcal{K}$. Taking the differential of \eqref{eq: P lyap} with respect to the variable $\mathcal{K}$, we get
\begin{align*}
&dP=\mathcal{A}_c dP\mathcal{A}_c^{\T} + d\mathcal{A}_c P\mathcal{A}_c^{\T} + \mathcal{A}_c P d\mathcal{A}_c^{\T} \triangleq \mathcal{A}_c dP\mathcal{A}_c^{\T} +X\\
&\implies \Tr{Q_{\mathcal{K}}dP} \overset{\text{(a)}}{=} \Tr{X M}\overset{\text{(b)}}{=}2\Tr{\mc{C}P\mc{A}_c^{\T}M\mc{B}_u d\mc{K}},
\end{align*}
where $M\succeq 0$ satisfies $M=\mc{A}_c^{\T}M \mc{A}_c + Q_{\mc{K}}$, (a) follows from Lemma \ref{lemma: tech_res}, and (b) follows from $\Tr{d\mathcal{A}_c P\mathcal{A}_c^{\T}M}=\Tr{(d\mathcal{A}_c P\mathcal{A}_c^{\T}M)^{\T}}$ and using the trace cyclic property. Taking the differential of $Q_{\mc{K}}$, we get 
\begin{align*}
&d Q_{\mc{K}} = \mathcal{C}^{\T}d \mathcal{K}^{\T}R_u\mathcal{K}\mathcal{C} +\mathcal{C}^{\T}\mathcal{K}^{\T}R_u d \mathcal{K}\mathcal{C}\\
&\implies \Tr{d Q_{\mc{K}}P}\overset{\text{(c)}}{=}2\Tr{\mc{C}P\mc{C}^{\T}\mc{K}^{\T}R_u d\mc{K}},
\end{align*}
where (c) follows similarly as (b). For notational convenience, we denote $\mc{J}_z(\mc{K})$ by $\mc{J}_z$. Taking the differential of $\mc{J}_z$ in \eqref{eq: LQG cost z steady-state}, we get,  
\begin{align} \label{eq: derivative of Jz}
&d\mc{J}_z=d\Tr{Q_{\mc{K}}P}=\Tr{dQ_{\mc{K}}P}+\Tr{Q_{\mc{K}}dP} \nonumber\\
&\quad~~=2\Tr{\left(\mc{C}P\mc{C}^{\T}\mc{K}^{\T}R_u+\mc{C}P\mc{A}_c^{\T}M\mc{B}_u \right)d\mc{K}} \nonumber \\
&\implies \frac{d\mc{J}_z}{d\mc{K}}=2\left(R_u\mc{K}\mc{C}P\mc{C}^{\T}+\mc{B}_u^{\T}M\mc{A}_c P\mc{C}^{\T}\right)\\
&\qquad\qquad\ \; =2\left(R_u + \mc{B}_u^{\T}M\mc{B}_u\right)\mc{K}\mc{C}P\mc{C}^{\T}+2\mc{B}_u^{\T}M\mc{A}P\mc{C}^{\T} \nonumber
\end{align}
The stationary optimality condition implies $\frac{d\mc{J}_z}{d\mc{K}}=0$, we get
\begin{align}\label{eq: behavioral LQG gain proof}
\mc{K}=-\left(R_u + \mc{B}_u^{\T}M\mc{B}_u\right)^{-1} \mc{B}^{\T}M\mc{A}P\mc{C}^{\T}\left(\mc{C}P\mc{C}^{\T}\right)^{\dagger}+\mc{K}_{\text{null}},
\end{align}
where we have used the right pseudo inverse of $\mc{C}P\mc{C}^{\T}$ since it is rank deficient, and $\mc{K}_{\text{null}}$ is any matrix with appropriate dimension whose rows belong to the left null space of $\mc{C}P\mc{C}^{\T}$. Next we derive the Riccati equations of $M$ and $P$. Let $S_M\triangleq(R_u+\mc{B}_u^{\T}M\mc{B}_u)^{-1}$ and
  $S_P\triangleq(\mc{C}P\mc{C}^{\T})^{\dagger}$. Substituting the expression of $\mc{K}$ in \eqref{eq: behavioral LQG gain proof} into \eqref{eq: P lyap}, we get
\begin{align*}
P&=\mc{A}P\mc{A}^{\T} -\mc{A}P\mc{C}^{\T}S_p\mc{C}P\mc{A}^{\T}M \mc{B}_u S_M\mc{B}_u^{\T}\\
&-\mc{B}_u S_M\mc{B}_u^{\T}M\mc{A}P\mc{C}^{\T} S_P \mc{C}P\mc{A}^{\T} +\mc{B}_w Q_w \mc{B}_w^{\T} +\mc{B}_v R_v \mc{B}_v^{\T}\\
&+\mc{B}_u S_M\mc{B}_u^{\T}M\mc{A}P\mc{C}^{\T}\underbrace{S_P \left(\mc{C}P\mc{C}^{\T}\right) S_P}_{\overset{\text{(d)}}{=}S_p} \mc{C}P\mc{A}^{\T}M \mc{B}_u S_M \mc{B}_u^{\T}\\
\overset{\text{(e)}}{=}&\mc{A}P\mc{A}^{\T} -\mc{A}P\mc{C}^{\T}S_p\mc{C}P\mc{A}^{\T}M \mc{B}_u S_M\mc{B}_u^{\T}\\
&~\;-\mc{B}_u S_M\mc{B}_u^{\T}M\mc{A}P\mc{C}^{\T} S_P \mc{C}P\mc{A}^{\T} +\mc{B}_w Q_w \mc{B}_w^{\T} +\mc{B}_v R_v \mc{B}_v^{\T}\\
&~\;+\mc{B}_u S_M\mc{B}_u^{\T}M\mc{A}P\mc{C}^{\T}S_P \mc{C}P\mc{A}^{\T}M \mc{B}_u S_M \mc{B}_u^{\T}\\
&~\; +\mc{A}P\mc{C}^{\T}S_p\mc{C}P\mc{A}^{\T} - \mc{A}P\mc{C}^{\T}S_p\mc{C}P\mc{A}^{\T}\\
=&\mc{A}P \mc{A}^{\T} - \mc{A}P\mc{C}^{\T}S_P
        \mc{C}P \mc{A}^{\T}+\mc{B}_w Q_w\mc{B}_w^{\T}
        + \mc{B}_v R_v\mc{B}_v^{\T}\\
    +&\left(I-M\mc{B}_u S_M
       \mc{B}_u^{\T}\right)^{\T}\mc{A}P\mc{C}^{\T}S_P\mc{C}P
       \mc{A}^{\T} \left(I-M\mc{B}_uS_M\mc{B}_u^{\T}\right),
\end{align*}
where (d) follows from the Moore-Penrose conditions, and in (e) we have added and subtracted the term $\mc{A}P\mc{C}^{\T}S_p\mc{C}P\mc{A}^{\T}$. The Riccati equation of $M$ is derived in similar manner.
\end{pfof}

\subsection{Proof of Lemma \ref{lemma: sparsity of the LQG gain}}\label{app: sparsity of the LQG gain}

$\mc{K}_2$ in Lemma \ref{lemma: sparsity of the LQG gain} corresponds to the first block of $\mc{T}_2-GE^{n}\mc{T}_1^{\dagger} \mc{M}$ in \eqref{eq: static controller in z}. We start by expanding $GE^{n}\mc{T}_1^{\dagger} \mc{M}$. Since $\mc{T}_1^{\dagger}$ is full column rank, we have
\begin{align*}
\mc{T}_1^{\dagger}&=\left(\mc{T}_1^{\T} \mc{T}_1\right)^{-1}\mc{T}_1^{\T}\\
&=\underbrace{\left(G^{\T}G  + \cdots + (E^{n-1})^{\T} G^{\T} GE^{n-1}\right)^{-1}}_{\triangleq S}\mc{T}_1^{\T},
\end{align*}
then we have
\begin{align*}
&G E^n \mc{T}_1^{\dagger}\mc{M}=\\
&GE^n S 
\left[\begin{array}{c;{2pt/2pt}c}
{G^{\T}H+\cdots + (E^{n-1})^{\T} G^{\T} GE^{n-2}F} & \small\text{X}
\end{array}\right],
\end{align*}
where $\text{X}$ denotes any matrix. Then, we take the first block of $G E^n \mc{T}_1^{\dagger}\mc{M}$ and the first block of $\mc{T}_2$ to write $\mc{K}_2$ as
\begin{align*}
\mc{K}_2 =  &GE^{n-1}F \\
&- GE^n S \left(G^{\T}H + \cdots + (E^{n-1})^{\T} G^{\T} GE^{n-2}F\right)\\
\overset{\text{(a)}}{=}& GE^{n-1}F \\
&- GE^n \underbrace{S \left(\overline{G}^{\T}\overline{G} + \cdots + (\overline{E}^{n-1})^{\T} \overline{G}^{\T} \overbar{G}\overline{E}^{n-1}\right)}_{\overset{\text{(b)}}{=} I}\overline{F}\\
\overset{\text{(c)}}{=}& GE^{n-1}F - GE^{n-1}F = 0,
\end{align*}
where in steps (a), (b) and (c) we have used Lemma \ref{lemma: compensator forms}.~\QEDA

\subsection{Proof of Lemma \ref{lemma: number of expert data}}\label{app: number of expert data}
\noindent Since the rank of $Y_N$ in \eqref{eq: expert data} is $\left. \Rank(Y_N) \leq nm+np\right.$, $\left. k=nm+np\right.$ columns are enough for $\Rank(Y_N)$ to stop increasing. To construct $Y_N$ with $\left. k=nm+np\right.$ columns, $ nm+np+n$ samples are required. Therefore, $\left. N=nm+np+n\right.$ expert samples are sufficient to learn the LQG gain $\mc{K}$. This completes the proof.~\QEDA
\end{document}